\newtheorem{remarkb}{Remark}}
\newcommand{\setF}{\mathcal{F}}
\newcommand{\setL}{\mathcal{L}}
\newcommand{\diff}{\mathrm{d}}
\newcommand{\Exp}{\mathbb{E}}
\renewcommand{\sc}{\textnormal{\tiny{SBS}}}
\newcommand{\ul}{\textnormal{\tiny{UL}}}
\newcommand{\dl}{\textnormal{\tiny{DL}}}
\newcommand{\sir}{\mathsf{SIR}}
\newcommand{\Phit}{\mathsf{P}_{\mathrm{hit}}}
\newcommand{\Psuc}{\mathsf{P}_{\mathrm{suc}}}
\newcommand{\PsucLB}{\underline{\mathsf{P}}_{\mathrm{suc}}}
\newcommand{\red}[1]{\textcolor{black}{#1}}
\begin{document}

\newacro{3GPP}{3rd Generation Partnership Project}
\newacro{4G}{fourth generation}
\newacro{5G}{fifth generation}
\newacro{ASE}{area spectral efficiency}
\newacro{AWGN}{additive white Gaussian noise}
\newacro{BB}{base-band}
\newacro{BC}{broadcast channel}
\newacro{BD}{block diagonalization}
\newacro{BER}{bit error rate}
\newacro{BN}{backhaul node}
\newacroplural{BNs}{backhaul nodes}
\newacro{BS}{base station}
\newacroplural{BSs}{base stations}
\newacro{CAPEX}{capital expense}
\newacro{CDF}{cumulative distribution function}
\newacro{CF}{collaborative filtering}
\newacro{CQI}{channel quality indicator}
\newacro{CSI}{channel state information}
\newacro{CSIT}{channel state information at the transmitter}
\newacro{D2D}{device-to-device}
\newacro{DL}{downlink}
\newacro{DPC}{dirty paper coding}
\newacro{FD}{full-duplex}
\newacro{FDD}{frequency-division duplexing}
\newacro{FDMA}{frequency division multiple access}
\newacro{HD}{half-duplex}
\newacro{HetNet}{heterogeneous network}
\newacroplural{HetNets}{hetrogenoeus networks}
\newacro{IA}{interference alignment}
\newacro{i.i.d.}{independent and identically distributed}
\newacro{INI}{inter-node interference}
\newacro{KKT}{Karush-Kuhn-Tucker}
\newacro{LOS}{line-of-sight}
\newacro{LTE}{long term evolution}
\newacro{M2M}{machine-to-machine}
\newacro{MCS}{modulation and coding scheme}
\newacro{MIMO}{multiple-input multiple-output}
\newacro{MMSE}{minimum mean square error}
\newacro{mmWave}{millimeter wave}
\newacro{MRC}{maximum-ratio combining}
\newacro{MRT}{maximum-ratio transmission}
\newacro{MU}{multiuser}
\newacro{NLOS}{non-line-of-sight}
\newacro{NR}{New Radio}
\newacro{OPEX}{operating expenses}
\newacro{PPP}{Poisson point process}
\newacroplural{PPPs}{Poisson point processes}
\newacro{QoS}{quality of service}
\newacro{RAN}{radio access network}
\newacro{RF}{radio frequency}
\newacro{RX}{receiver}
\newacroplural{RXs}{receivers}
\newacro{SA}{standalone}
\newacro{SBS}{small-cell base station}
\newacroplural{SBSs}{small-cell \acp{BS}}
\newacro{SC}{small cell}
\newacroplural{SCs}{small cells}
\newacro{SI}{self-interference}
\newacro{SIC}{self-interference cancellation}
\newacro{SINR}{signal-to-interference-plus-noise ratio}
\newacro{SIR}{signal-to-interference ratio}
\newacro{SNR}{signal-to-noise ratio}
\newacro{SU}{single user}
\newacro{SVD}{singular value decomposition}
\newacro{SWIPT}{simultaneous wireless information and power transfer}
\newacro{TDD}{time-division duplexing}
\newacro{TDMA}{time division multiple access}
\newacro{TL}{transfer learning}
\newacro{TX}{transmitter}
\newacroplural{TXs}{transmitters}
\newacro{UDN}{ultra-dense network}
\newacro{UE}{user equipment}
\newacroplural{UEs}{user equipments}
\newacro{UL}{uplink}
\newacro{ULA}{uniform linear array}
\newacroplural{ULAs}{uniform linear arrays}
\newacro{UT}{user terminal}
\newacroplural{UTs}{user terminals}
\newacro{WEC}{wireless edge caching}
\newacro{WF}{water-filling}
\newacro{WPT}{wireless power transfer}
\newacro{ZF}{zero forcing}

\title*{Full-Duplex Radios for Edge Caching}
\titlerunning{Full-Duplex Radios for Edge Caching}
\author{Italo Atzeni and Marco Maso}
\institute{Italo Atzeni \at Communication Systems Department, EURECOM, Sophia Antipolis, France. \\
\email{italo.atzeni@ieee.org} \\
Marco Maso\at Nokia Bell Labs, Paris-Saclay, France. \\
\email{marco.maso@nokia.com}}

\maketitle

\vspace{-2cm}

\setcounter{page}{0}
\pagenumbering{arabic}
\setcounter{page}{1}

	
\abstract{\red{Recent studies have shown that edge caching may have a beneficial effect on the sustainability of future wireless networks}. While its positive impact at the network level is rather clear (in terms of, e.g., access delay and backhaul load), assessing its potential benefits at the physical layer is less straightforward. This chapter builds upon this observation and \red{focuses} on the performance enhancement brought by the addition of caching capabilities to \ac{FD} radios in the context of \acp{UDN}. \red{More specifically, we aim at showing that} the interference footprint of such networks, i.e., the major bottleneck to overcome to observe the theoretical \ac{FD} throughput doubling at the network level, can be significantly reduced thanks to edge caching. As a matter of fact, fundamental results available in the literature show that most of the gain, as compared to their \ac{HD} counterparts, can be achieved by such networks only if costly modifications to their infrastructure are performed and/or if high-rate signaling is exchanged between \acp{UE} over suitable control links. Therefore, we aim at proposing a viable and cost-effective alternative to these solutions based on \red{pre-fetching locally popular contents at the network edge}. We start by considering an interference-rich scenario such as an ultra-dense \ac{FD} small-cell network, in which several non-cooperative \ac{FD} base stations (BSs) serve their associated \acp{UE} while communicating with a wireless \ac{BN} to retrieve the content to deliver. We then describe \red{a geographical caching policy aiming at capturing local files popularity and compute the corresponding cache-hit probability. Thereupon}, we calculate the probability of successful transmission of a file requested by a \ac{UE}, either directly by its serving \ac{SBS} or by the corresponding \ac{BN}: this quantity is then used to lower-bound the throughput of the considered network. \red{Our approach leverages tools from stochastic geometry in order to guarantee both analytical tractability of the problem and generality of the results}. A set of suitable numerical simulations is finally performed to confirm the correctness of the theoretical findings and characterize the performance enhancement brought by the adoption of edge caching. The most striking result in this sense is the remarkable performance improvement observed when \red{shifting from cache-free to cache-aided \ac{FD} small-cell networks}.}

\section{Introduction} \label{sec:INTRO}

The last decade has witnessed the progressive introduction of the \ac{4G} cellular network technology and the concurrent adoption of increasingly competitive pricing strategies by device manufacturers and telcos. As a consequence, devices that are able to offer reliable broadband data connections to their users, i.e., smartphones, ceased to be premium products and became a commodity. Their market penetration is already massive and keeps progressing at steady pace. Recent studies forecast that smartphones will represent 86$\%$ of the total mobile data traffic by 2021, compared to 81$\%$ in 2016, and that \red{monthly} mobile data traffic will reach 49 exabytes \red{worldwide} (or\red{, equivalently,} a run rate of 587 exabytes annually) \cite{Cis17}.

The amount of network resources needed to support these trends is ever-increasing. Telcos already anticipate that current mobile networks will have to be restructured to cope with both future service demands and the multitude of novel mobile broadband applications constantly introduced in the market. Many important requirements have been identified in this context, such as the need for higher spectral and energy efficiency, lower end-to-end delays, better coverage, large scalability, and lower \ac{CAPEX} and \ac{OPEX}, just to name a few \cite{And14}. As a consequence, one of the strongest drivers in the last years for several research groups in both industry and academia has been the need to define a more advanced and flexible network technology as compared to \ac{4G}, i.e., the so-called \ac{5G}. The remarkable results of such activities have already yielded significant outcomes within \red{standardization development organizations} like the \ac{3GPP}, who have already published the first version of the standard that will guide the deployment of future \ac{5G} wireless networks, i.e., \ac{3GPP} Release 15 \cite{3GPP38211,3GPP38214}. This evolution, conventionally referred to as \ac{NR}, is planned to hit the market in its non-standalone version within 2019, and is expected to provide significant gains over previous systems. 

From a practical point of view, \ac{NR} deployments will be characterized by the introduction, or further development, of several key solutions expected to bring the sought performance enhancement as compared to existing networks. Interestingly, only some strategies and network configurations have been and are subject to standardization, whereas some others are considered as part of the implementation aspects. Noteworthy and representative examples of these two categories are \cite{3GPP38211,3GPP38214}:
\begin{itemize}
	\item[$\bullet$] \textit{Massive \ac{MIMO}}: this natural candidate for the physical layer of NR has imposed a revision of the reference sequences and \ac{CSI} feedback mechanisms \cite{Boc14,Bjo17};
	\item[$\bullet$] \textit{Advanced \ac{MIMO} precoding}: the adoption of such strategies at the \ac{BS} should be completely transparent to the \ac{UE}, i.e., precoding solutions are implementation aspects that are not specified in the standard. 
\end{itemize}
As a matter of fact, the relevance and impact of many other technologies and approaches will increase in future \ac{5G} networks as compared to their current role in mobile and fixed networks, regardless of their \ac{3GPP} standardization status (i.e., specified or not). In this chapter, we specifically focus on two of these approaches to study and discuss the potential brought by their mutual interactions:
\begin{itemize}
	\item[$\bullet$] The so-called \red{\textit{proactive caching} at the network edge}, by means of which contents (e.g., videos, images, and news) are brought closer to the users and intelligently cached at \acp{SBS} equipped with high-capacity storage units, as illustrated in Fig.~\ref{fig:caching}. As a result, the end-to-end access delay is significantly reduced, the mobile infrastructure is offloaded, and the impact of limited-capacity backhaul on the network performance is mitigated \cite{Bas15,Li18,Ham16,Yan18,Kri17,Pas16}. The role of edge caching becomes particularly crucial in case of \ac{UDN} deployments, i.e., massively populated (and possibly heterogeneous) networks in which the distance between \acp{BS} and served \acp{UE} is reduced as compared to classic macro-cell networks \cite{Liu17,Khr16,Mas17,Atz17a}. Such \acp{UDN} may comprise several layers, each of them including different \red{categories of cells (i.e., femto, pico, micro, and macro cells)} \cite{And11,Dhi12,Yun15}. In general, this layered architecture allows to design efficient strategies to offload the pre-existing macro-cell infrastructure and enhance the network capacity, especially when several nodes provide caching support \cite{Bas14}.
	\item[$\bullet$] The transition from \ac{HD} to \textit{\ac{FD} operations} at radio terminals also promises to offer many benefits, although subject to some peculiar limitations \cite{Sab14,Atz15,Ton15,Goy15,Atz17}. \red{A \ac{FD} device does not require separate time/frequency resources to be able to support data transmission and reception. In other words, it can simultaneously transmit and receive data over the same bandwidth, thus having the potential to achieve a theoretical throughput doubling and energy efficiency enhancement in comparison to \ac{HD} radios}. In particular, equipping network nodes with \ac{FD} capabilities can simplify the adoption of flexible duplexing strategies such as dynamic \ac{TDD} and enable readjustments to frame structures on-the-fly. Additionally, \ac{FD} transmission offers advantages in terms of operation, cost, and efficiency as compared to traditional \ac{HD} operating mode \cite{Mas15}.
\end{itemize}
\begin{figure*}[t!]
	\centering
	\includegraphics[scale=0.32]{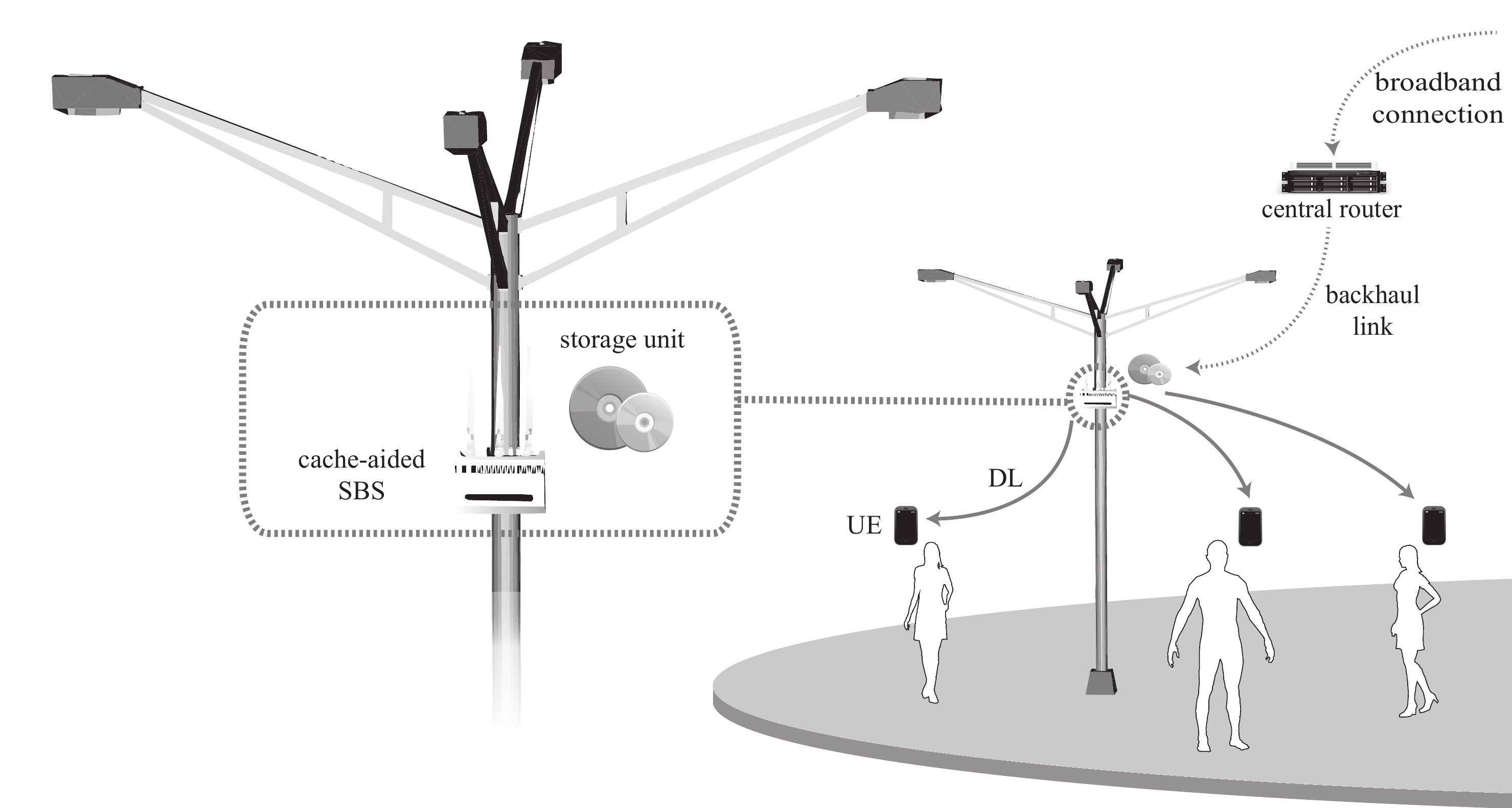}
	\caption{Cache-aided \ac{SBS}: the \ac{SBS} is equipped with a storage unit to pre-fetch popular content.}
	\label{fig:caching}
\end{figure*} 
\red{The aforementioned approaches certainly have significant potential if taken individually. Nevertheless, assessing the extent of their interoperability is not straightforward}. This is mostly due to the interference footprint of the \ac{FD} links \cite{Goy15}, which may complicate a seamless integration of caching capabilities at each network nodes. At this stage, a brief introduction of such technology is in order, to better characterize its features and issues, before studying the impact of edge caching on the performance of \ac{FD} radios and networks.

\subsection{Full Duplex Communications} \label{sec:FullDuplexCommunications}

The majority of current wireless radios operate in \ac{HD} mode. In practice, these devices perform data transmission and reception over separate time/frequency resources. Depending on the way such resources are used, we can have either \ac{TDD} or \ac{FDD} operations, i.e., \ac{UL} and \ac{DL} transmissions occur over two different time or frequency resources, respectively. This approach has several advantages in terms of both ease of implementation and rather straightforward network operations to perform multi-cell transmissions. As a matter of fact, it can be argued that this implicitly sets a hard constraint on the spectral efficiency of the system. For this reason, many research efforts have been performed lately to investigate the potential and the feasibility of \ac{FD} communications, in which \red{the same time/frequency resource is used to perform the \ac{UL} and \ac{DL} data transmissions}. However, the strong \ac{SI} observed by the FD radio during the signal reception enforces a crucial obstacle to the feasibility of such approach. In other terms, a non-negligible portion of the transmitted signal is always received by the device's receive chain, in turn reducing the \ac{SINR} of the incoming useful signals \cite{Dua12,Bha13,Bha14}. This situation is schematically depicted in Fig.~\ref{fig:FD}.
\begin{figure*}[t!]
	\centering
	\includegraphics[scale=0.32]{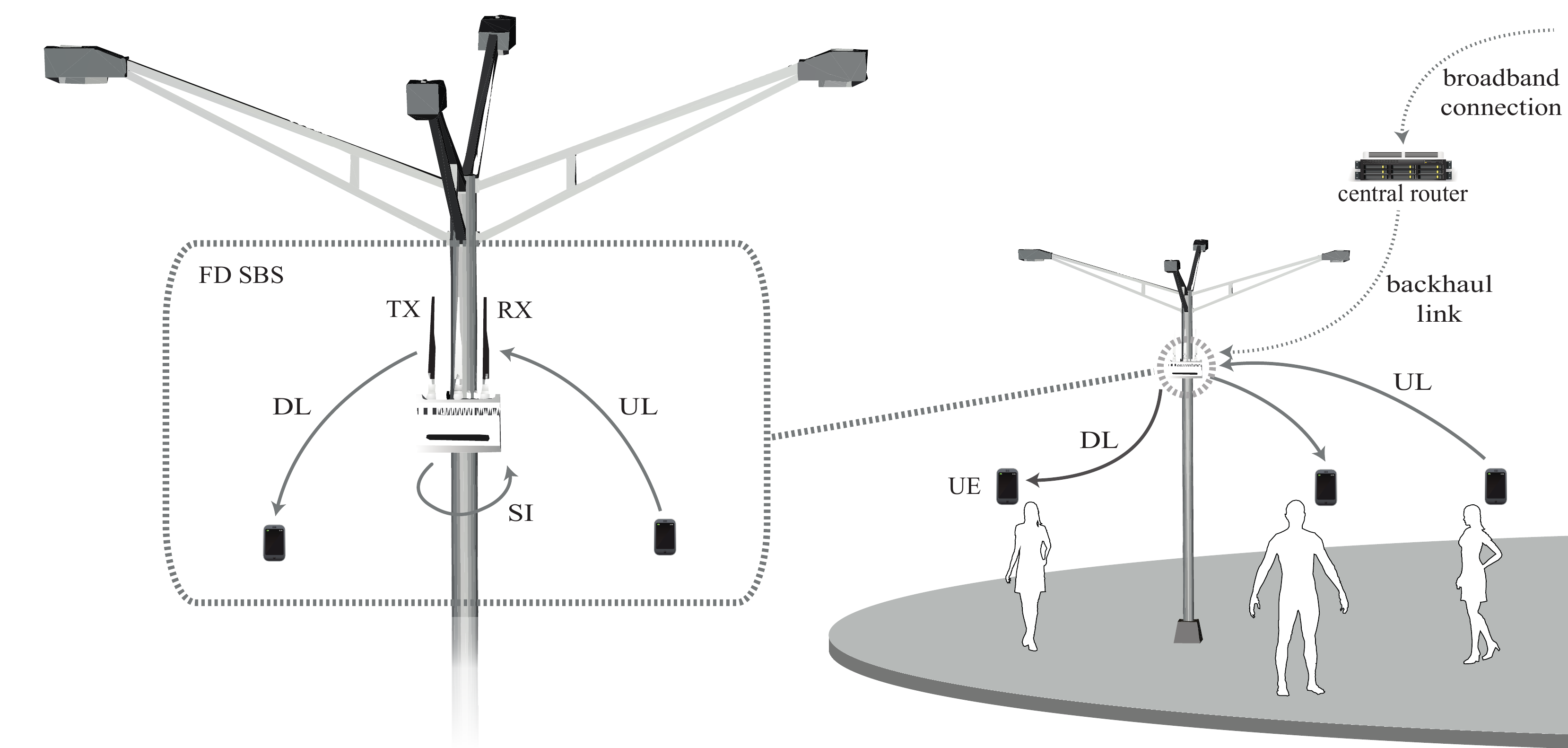}
	\caption{\ac{FD} \ac{SBS}: data transmission and reception \red{occur in the same time/frequency resource.}}
	\label{fig:FD}
\end{figure*}
Many different transceiver designs and \ac{SIC} algorithms have been devised to ensure the feasibility of \red{\ac{FD} operations} \cite{Bha13,Bha14,Cho10,Kno12,Bha14a,Phu13,Jai11,Li12,Ahm15,Kor14,Atz16}. These solutions can be classified into two major categories based on passive or active cancellation. In the former case, \ac{SIC} is achieved in the propagation domain by physical separation of the transmit and receive antennas. Conversely, active \ac{SIC} solutions exploit the \ac{FD} node's knowledge of its own transmitted signal to subtract it from the receive signal after appropriate manipulations and processing.

Unfortunately, the \ac{SI} is not the only problem that system designers must face when dealing with \ac{FD} radios. The major obstacle to their practical adoption in future \ac{5G} networks is arguably the aggregated interference footprint resulting from multiple and concurrent \ac{FD} communications within the network. Let us provide an example to highlight this issue. Consider a simple network composed \red{of several} \ac{FD} nodes arranged in \ac{BS}/\ac{UE} pairs and take an active \ac{BS}/\ac{UE} pair as reference. During \ac{UE}-to-\ac{BS} \ac{UL} operations, every neighboring \ac{BS} engaging in \ac{DL} transmission strongly interferes with the considered \ac{BS}, inducing the so-called \textit{\ac{BS}-to-\ac{BS} interference}. Similarly, during \ac{BS}-to-\ac{UE} \ac{DL} operations, all the \acp{UE} performing \ac{UL} transmission heavily interfere with the considered \ac{UE}, creating the so-called \textit{\ac{UE}-to-\ac{UE} interference}, also referred to as \ac{INI} \cite{Ale16,Atz16a}. In practice, the FD throughput gain tends to 2 in case of very sparse deployment of nodes. Nevertheless, such gain saturates quickly as the network density increases, the fundamental reason being that the number of interfering nodes also doubles with respect to the \ac{HD} case. This becomes more significant when either the link distance decreases or the node density increases \cite{Wan17}. In other words, \red{the theoretical throughput doubling brought by \ac{FD} at the device level does not seem to materialize straightforwardly at the network level} (regardless of the effectiveness of the adopted \ac{SIC} algorithms), unless specific and possibly costly countermeasures are taken. \red{As a result, the aggressive spatial frequency reuse inherent to dense network deployments may not be feasible due to the presence of a multitude of \ac{FD} links mutually interfering at all times}.

Studies and analysis of the \ac{FD} interference footprint have been recently carried out to identify viable strategies to reduce it and improve the scalability of the \ac{FD} throughput enhancement. A myriad of approaches has been proposed to address this problem; indeed they range from user scheduling algorithms to advanced interference management and power control techniques. A common feature shared by such solutions is that they require \red{ the adoption of additional signaling among nodes or the implementation of heavy infrastructural changes} \cite{Atz17,Ale16,Atz16a,Goy15,Wan15,Bai13,Atz16}. In this context, two fundamental results can be highlighted: on the one hand, it is shown in \cite{Goy15} that most of the theoretical network throughput gain is achievable if only the \acp{BS} operate in \ac{FD} while the \acp{UE} operate in \ac{HD}, and centralized scheduling decisions are taken by a central unit enjoying full access to global system information; on the other hand, it is shown in \cite{Bai13} that, in case of distributed network control and operations, \ac{FD} gains can be observed only if the \acp{UE} can exchange suitable information about \ac{INI} over one or more orthogonal control links. Hence, the relevance of the aforementioned results is mostly theoretical, as the advocated infrastructural changes at the network level are extremely expensive.
One of the goals of this chapter is to investigate the feasibility of a constructive alternative to such approaches: this considers the interactions between \ac{FD} operations and smart caching strategies, and avoids any substantial changes to the network infrastructure and to the signaling exchange among nodes.

\begin{figure}[t!]
	\centering
	\includegraphics[scale=0.32]{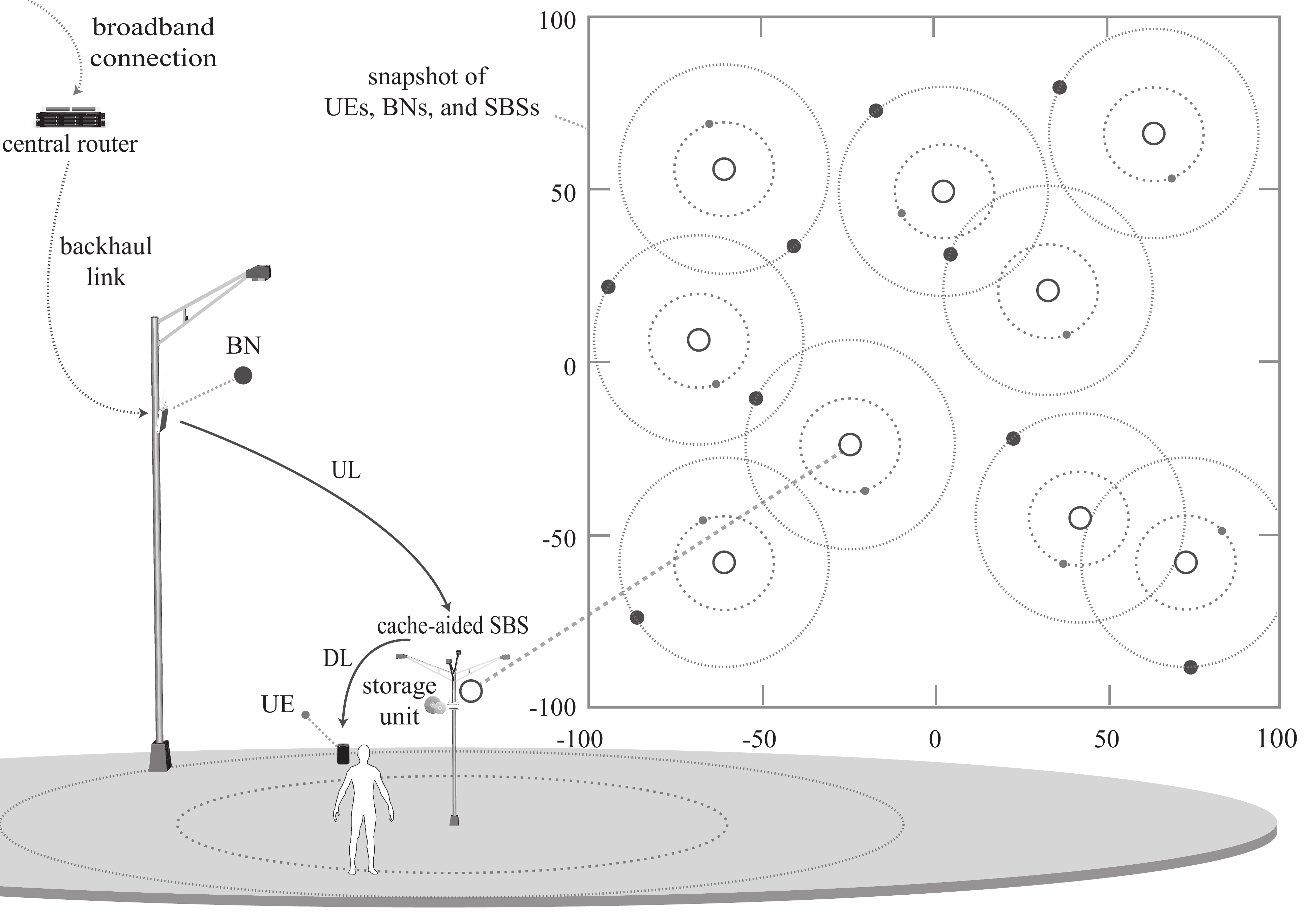}
	\caption{Snapshot of the marked PPP modeling the \acp{SBS}, \ac{UL} nodes, and \ac{DL} nodes.}
	\label{fig:snap}
\end{figure}

\section{System Model} \label{sec:SM}
\subsection{Network Model} \label{sec:SM_net}

Consider a \ac{UDN} comprising: \textit{i)} a tier of macro-cell \acp{BN} equipped with internet access, \textit{ii)} a tier of \acp{SBS} \red{providing} network coverage, and \textit{iii)} a set of mobile \acp{UE}. Each \ac{SBS} communicates with only one \ac{BN} in the \ac{UL} direction and transmits contents to only one \ac{UE} in the \ac{DL} direction, \red{functioning} as a relay between the two. The \acp{SBS} operate in \ac{FD}, whereas both \acp{BN} and \acp{UE} operate in \ac{HD} mode; the same time/frequency resource is used for \red{the communications in both directions}. In the following, and focusing our attention on the \acp{SBS}, the \acp{BN} and the \acp{UE} are referred to as \textit{\ac{UL} nodes} and \textit{\ac{DL} nodes}, respectively.

Spatial random models allow to \red{seize} the randomness of realistic ultra-dense small-cell deployments and, \red{in addition, to derive tractable and accurate} expressions for system-level performance \red{analysis} \cite{Hae12}. Therefore, we model the spatial distribution of the network nodes (i.e., \acp{SBS} and \ac{UL}/\ac{DL} nodes) \red{using} the homogeneous, independently marked \ac{PPP} $\Phi \triangleq \big\{ (x, u(x), d(x))\big\} \subset \mathbb{R}^{2} \times \mathbb{R}^{2} \times \mathbb{R}^{2}$. \red{Here}, we let $\Phi_{\sc} \triangleq \{x\}$ denote the ground \ac{PPP} of the \acp{SBS} with spatial density $\lambda$ (measured in [SBSs/m$^{2}$]), whereas the isotropic marks $\Phi_{\ul} \triangleq u(\Phi_{\sc}) = \{u(x)\}_{x \in \Phi_{\sc}}$ and $\Phi_{\dl} \triangleq d(\Phi_{\sc}) = \{d(x)\}_{x \in \Phi_{\sc}}$ denote the PPPs of the \ac{UL} and \ac{DL} nodes, respectively. Furthermore, let $r_{y,z} \triangleq \| y - z \|$ be the distance \red{between nodes} $y,z \in \Phi$; the distances of the \ac{UL} and \ac{DL} nodes from their associated \acp{SBS} are assumed fixed and are denoted by $R_{\ul} \triangleq r_{u(x),x}$ and $R_{\dl} \triangleq r_{x,d(x)}$, $\forall x \in \Phi_{\sc} $, respectively. It is thus evident that, according to these definitions, the \acp{PPP} $\Phi_{\ul}$ and $\Phi_{\dl}$ are dependent on the ground PPP $\Phi_{\sc}$ and have the same spatial density of the latter. Lastly, since the \acp{SBS} cover small areas compared with the \acp{BN}, one can reasonably assume that $R_{\ul} \gg R_{\dl}$. A snapshot of the considered two-tier network is given in Fig.~\ref{fig:snap}.

\subsection{Cache-aided Network Nodes} \label{sec:SM_caching}

Let us assume that the \ac{UL} nodes \red{have direct access to the} \textit{global file catalog} $\mathcal{F} \triangleq \lbrace f_{1}, f_{2}, \ldots, f_{F} \rbrace$, with $|\setF| = F$, which \red{can be interpreted as} a subset of all the contents available on the internet. Without loss of generality, we assume that all files have identical length, as files with different lengths can be always split into chunks of equal size. In this context, whenever a \ac{DL} node \red{sends a request for a content in} $\setF$, its serving \ac{SBS}, operating in \ac{FD} mode, fetches the corresponding file from the associated \ac{UL} node and delivers it to the \ac{DL} node. \red{In \ac{UDN} scenarios, however, the reliability of the content transmission} may be reduced by the aggressive spatial frequency reuse, which may sensibly diminish the throughput with respect to an equivalent \ac{HD} network.

\begin{figure}[t!]
	\centering
	\includegraphics[scale=0.4]{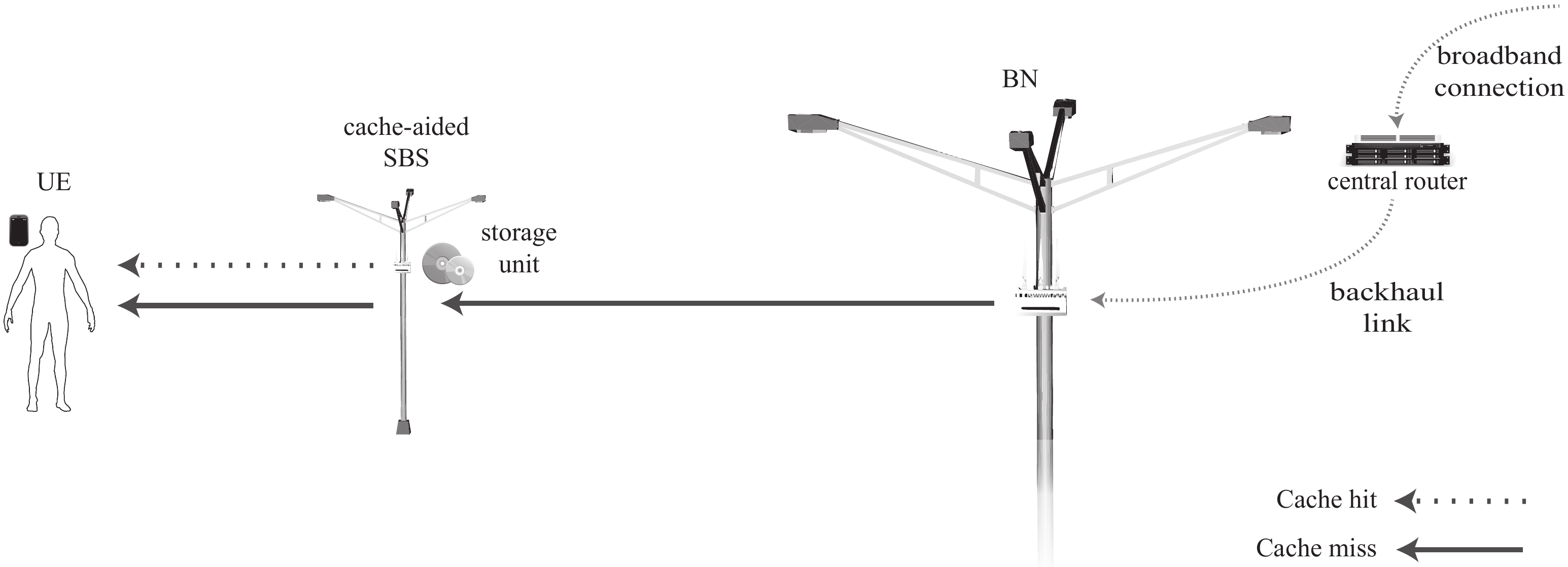}
	\caption{Communication links in case of cache hit and cache-miss.}
	\label{fig:cacheHitMiss}
\end{figure}

\red{Assume} that \ac{SBS} $x \in \Phi_{\sc}$ is equipped with a \textit{storage unit} $\Delta_{x}$ \red{with} size $S < F$ files and that \ac{DL} node $d(x)$ \red{sends a request for} file $f_{i} \in \setF$. Let  $\mathcal{P} \triangleq \lbrace p_{1}, p_{2}, \ldots, p_{F} \rbrace$, with $\sum_{i=1}^{F} p_{i} = 1$, be the set of request probabilities of each file, which depends on the files popularity over the whole network. Now, a \textit{cache-hit} event occurs whenever $f_{i} \in \Delta_{x}$, i.e., if $f_{i}$ is cached at \ac{SBS} $x$. In this case, \ac{DL} node $d(x)$ is served directly by \ac{SBS}~$x$ without any communication between \ac{SBS}~$x$ and \ac{UL} node $u(x)$; alternatively, a \textit{cache-miss} event occurs whenever $f_{i} \not\in \Delta_{x}$, i.e., if $f_{i}$ is not available in the cache, and \ac{SBS}~$x$ must fetch the file from \ac{UL} node $u(x)$ and deliver it to $d(x)$ in \ac{FD} mode (see Fig.~\ref{fig:cacheHitMiss}). Thus, a cache-hit event allows to offload the overlaying macro-cell infrastructure and, since the \ac{UL} becomes inactive, removes the need for the \ac{SBS} to operate in \ac{FD} mode. As a consequence, two major advantages can be observed in terms of \red{reduced interference}: \textit{i)} at the single-cell level, both the \ac{SI} (at the \ac{SBS}) and the \ac{INI} (at the \ac{DL} node) disappear; \textit{ii)} at the network level, the inter-cell interference is substantially reduced. Fig.~\ref{fig:systemmodel} provides a schematic representation of the so-obtained scenario, whose interference terms are described in Section~\ref{sec:SM_SIR}.

A key parameter to assess the effectiveness of the considered cache-aided approach is the so-called \textit{cache-hit probability}, denoted by $\Phit$, which is the probability that any file requested by a given \ac{DL} node is cached at its associated \ac{SBS}. The framework adopted in this chapter to model such probability is presented in Section~\ref{sec:CM}. In particular, such framework is designed to capture the local files popularity in non-cooperative random networks. Subsequently, we investigate the system-level performance gain brought by the deployment of cache-aided \acp{SBS} in \ac{FD} networks for a given $\Phit$ in Section~\ref{sec:PA}.

\begin{figure}[t!]
	\centering
	\includegraphics[width=0.95\textwidth]{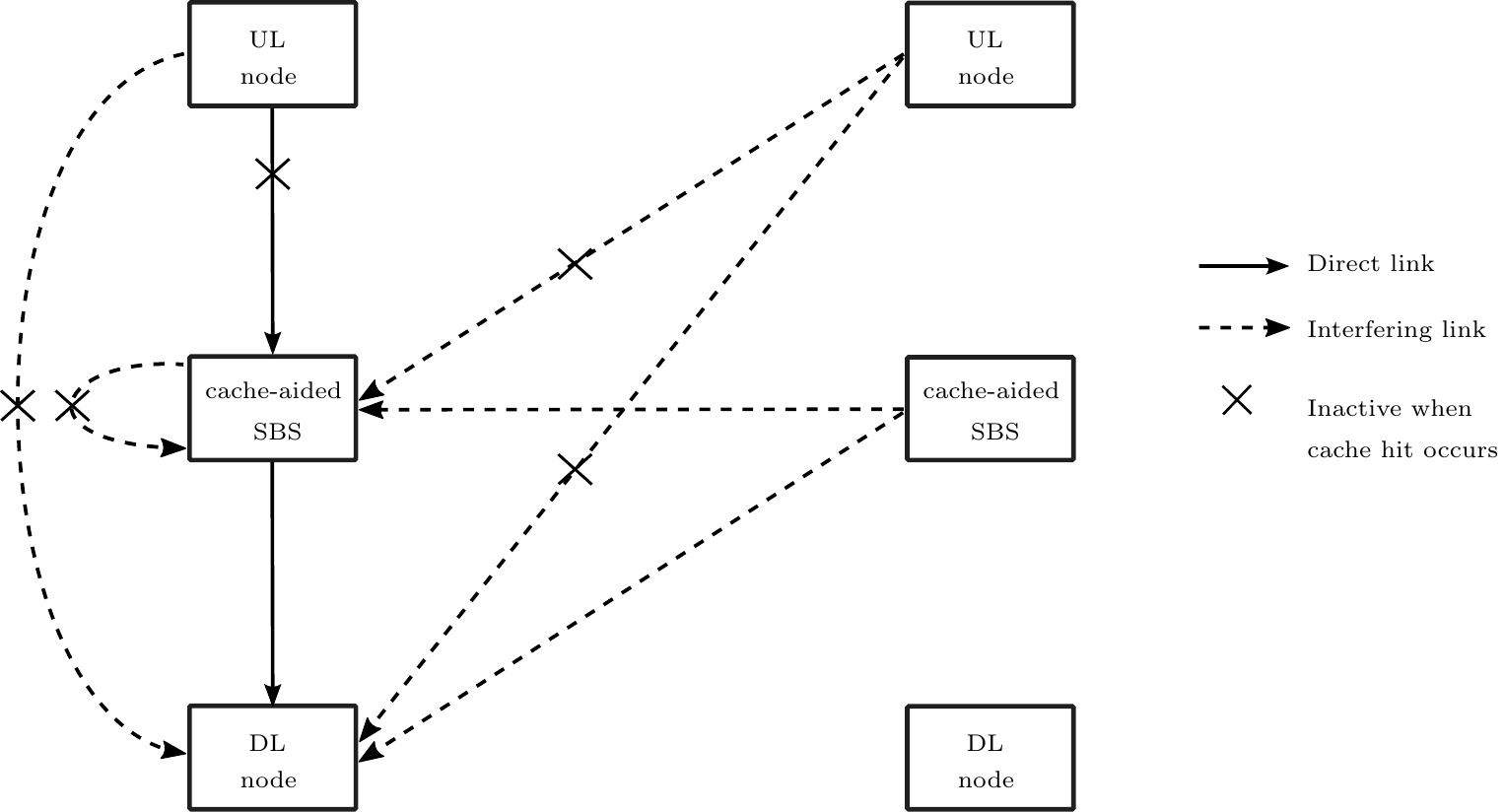} 
	\caption{Adopted system model including cache-aided \acp{SBS}, \ac{UL} nodes, and \ac{DL} nodes, with corresponding direct and interfering links.}
	\label{fig:systemmodel}
\end{figure}

\subsection{Channel Model} \label{sec:SM_ch}

In the considered system model, \red{we assume that all nodes are} single-antenna devices; the extension of our study to multi-antenna settings goes beyond the scope of this chapter and can be \red{accomplished using the analytical framework presented} in \cite{Atz17,Atz18}. In addition, it is assumed that the \ac{UL} nodes and the \acp{SBS} transmit with powers $\rho_{\ul}$ and $\rho_{\dl}$, respectively.

The wireless channel propagation is characterized as the combination of two main parameters, i.e., large-scale pathloss attenuation and small-scale fading. Let $\ell(y,z) \triangleq r_{y,z}^{-\alpha}$ be the pathloss function between nodes $y$ and $z$. We base our model upon the ITU-R urban micro-cellular (UMi) pathloss model described in \cite{3GPP36828}, where different attenuations are specified for the links between different types of nodes. Accordingly, we let $\alpha = \alpha_{2}$ if $y \in \Phi_{\ul} \land z \in \Phi_{\dl}$ (i.e., between \acp{BN} and \acp{UE}) and $\alpha = \alpha_{1}$ otherwise (i.e., between \acp{BN} and \acp{SBS} as well as between \acp{SBS} and \acp{UE}). In this respect, we assume non-line-of-sight propagation between \ac{UL} and \ac{DL} nodes, which results in stronger pathloss attenuation as compared to the other links, and set $\alpha_{2} \geq \alpha_{1} > 2$. Switching the focus to the small-scale fading, let $h_{y,z}$ denote the channel power fading gain between nodes $y$ and $z$. We assume that \red{the \ac{SI} channel is subject to Rician fading \cite{Dua12}, whereas all the other channels are subject to Rayleigh fading}. In other terms, we have that $h_{y,z} \sim \exp(1)$ if $y \neq z$ and $h_{y,y} \sim \Gamma(a,b)$. In particular, \red{the shape parameter $a$ and scale parameter $b$} of the SI distribution can be computed in closed form from the Rician $K$-factor $K$ and the \ac{SI} attenuation $\Omega$ measured at the \ac{SBS} when communicating in \ac{FD}, as detailed in \cite[Lem.~1]{Atz17}.

\subsection{Signal-to-Interference Ratio} \label{sec:SM_SIR}

Massive and dense small-cell deployments, such as the one considered in this chapter, are often characterized by heavy inter-cell interference as a result of the very short inter-site distance \cite{Atz17}. As a consequence, it is meaningful to specifically focus on the interference-limited regime, where the noise is overwhelmed by interference. In this context, the definition of an appropriate metric, such as the measured \ac{SIR} at the \acp{SBS} and at the \ac{DL} nodes, is paramount to be able to capture the essential features of the interference-limited regime. We start by denoting a cache-miss event at \ac{SBS} $x$ with the notation $\slashed{\Delta}_{x}$ and accordingly define the indicator function
\begin{align}
\mathbbm{1}_{\slashed{\Delta}_{x}} \triangleq \begin{cases}
1, & \mathrm{if} \ \slashed{\Delta}_{x}, \\
0, & \mathrm{otherwise}.
\end{cases}
\end{align}
The \ac{SIR} at \ac{SBS} $x$ may be written as
\begin{align} \label{eq:sir_x}
\sir_{x} & \triangleq \frac{\rho_{\ul} R_{\ul}^{-\alpha_{1}} h_{u(x),x}}{I_{x}}
\end{align}
with aggregate interference given by
\begin{align} \label{eq:I_x}
I_{x} \triangleq \sum_{y \in \Phi_{\sc} \backslash \{x\}} \big( \rho_{\dl} r_{y,x}^{-\alpha_{1}} h_{y,x} + \rho_{\ul} r_{u(y),x}^{-\alpha_{1}} h_{u(y),x} \mathbbm{1}_{\slashed{\Delta}_{y}} \big) + h_{x,x} \mathbbm{1}_{\slashed{\Delta}_{x}}
\end{align}
is the interference term. Similarly, the \ac{SIR} at \ac{DL} node $d(x)$ may be written as
\begin{align} \label{eq:sir_dx}
\sir_{d(x)} & \triangleq \frac{\rho_{\dl} R_{\dl}^{-\alpha_{1}} h_{x,d(x)}}{I_{d(x)}}
\end{align}
with aggregate interference given by
\begin{align} \label{eq:I_dx}
I_{d(x)} & \triangleq \sum_{y \in \Phi_{\sc} \backslash \{x\}} \big( \rho_{\ul} r_{y,d(x)}^{-\alpha_{1}} h_{y,d(x)} + \rho_{\dl} r_{u(y),d(x)}^{-\alpha_{1}} h_{u(y),d(x)} \mathbbm{1}_{\slashed{\Delta}_{y}} \big) + \rho_{\ul} r_{u(x),d(x)}^{-\alpha_{2}} h_{u(x),d(x)} \mathbbm{1}_{\slashed{\Delta}_{x}}.
\end{align}
The effect of equipping the \acp{SBS} with storage capabilities and \red{shifting} from a cache-free to a cache-aided scenario is rather evident upon observing \eqref{eq:I_x} and \eqref{eq:I_dx}. More precisely, a cache-hit event induces a reduction of the following major interference components, at both the network and the device level:
\begin{itemize}
	\item[$\bullet$] Aggregate network interference;
	\item[$\bullet$] \ac{SI} at the \acp{SBS} \cite{Dua12};
	\item[$\bullet$] \ac{INI} at the \ac{DL} nodes \cite{Ale16,Atz16a}.
\end{itemize}
We recall that the last two interference terms are the two main causes hindering the practical feasibility of \ac{FD} technology at the network level.

\section{Caching Model} \label{sec:CM}

A necessary step when performing studies on the performance of cache-aided networks is the definition of a caching model, whose role is to establish how files are requested and cached by \ac{DL} nodes and \acp{SBS}, respectively \cite{Bas15}. Accordingly, this section introduces the non-cooperative, static caching model used throughout this chapter, which aims at mimicking a geographical caching policy based on local files popularity.\footnote{More complex cooperative caching policies can be devised. However, this goes beyond the scope of this chapter.} In this regard, it is important to note that existing literature typically does not consider geographical aspects of the files popularity of the \acp{UE} when defining caching models (we refer to \cite{Pas16} for an overview on content request models).

Here, the spatial distribution of the contents from the global file catalog $\mathcal{F}$ is modeled by means of the homogeneous, independently marked \ac{PPP} $\Psi \triangleq \big\{ (y, f(y)) \big\} \subset \mathbb{R}^{2} \times \setF$, where $\Psi_{\setF} \triangleq \{ y \}$ \red{is} the \ac{PPP} of the files with spatial density $\eta$ (measured in [files/m$^{2}$]). \red{In this context}, each file $f_{i} \in \mathcal{F}$ \red{corresponds to} a thinned \ac{PPP} with spatial density $p_{i} \eta$. \red{Moreover}, we assume that the files in $\mathcal{F}$ are ordered by decreasing popularity, i.e., $p_{1} \geq p_{2} \geq \dots \geq p_{F}$.
%
%
%
%
\red{The considered caching model consists of two core concepts}, i.e., the \textit{request region} and the \textit{caching policy}, which describe how \ac{DL} nodes request files and how \acp{SBS} cache files, respectively.
The introduction of a last notation is in order to be able to explicitly add a geographical dimension to these two concepts. Accordingly, we let $\mathcal{B} (z, \nu)$ denote the ball of radius $\nu$ (measured in [m]) centered at node $z \in \Phi_{\sc} \cup \Phi_{\dl}$.
\begin{definition}[Request region] \label{def:R}
\red{Assume} that \ac{DL} node $d(x) \in \Phi_{\dl}$ is interested in requesting \red{locally popular files. Then,} the request region of \ac{DL} node $d(x)$ is defined as
\begin{equation}
\mathcal{R}_{d(x)} \triangleq \big\{ \Psi_{\setF} \cap \mathcal{B}( d(x), R_{\mathrm{R}}) \big\}
\end{equation}
with $R_{\mathrm{R}}$ defined as the \textit{radius} of the request region.
\end{definition}

\begin{remarkb}
From a qualitative point of view, $R_{\mathrm{R}}$ is related to the local interests of the \acp{UE} with respect to globally requested files. In other terms, if \ac{DL} node $d(x)$ is interested in requesting all possible files in the global file catalog $\mathcal{F}$, then $R_{\mathrm{R}} \rightarrow \infty$ (provided that $\{p_{i} > 0\}_{i=1}^{F}$).
\end{remarkb}

\begin{definition}[Caching policy]\label{def:C}
\red{Assume} that \ac{SBS} $x \in \Phi_{\sc}$ is interested in caching \red{locally popular files. Then,} the potential cache region is defined as
\begin{equation}
\mathcal{C}_{x} \triangleq \big\{ \Psi_{\setF} \cap \mathcal{B} (x, R_{\mathrm{C}}) \big\}
\end{equation}
with $R_{\mathrm{C}}$ defined as the radius of the potential cache region. The caching policy of \ac{SBS}~$x \in \Phi_{\sc}$ is defined as
\begin{equation}
\Delta_{x} \triangleq \big\{ f_i : f_i \in \mathcal{C}_{x} \land i \leq S \big\}.
\end{equation}
\end{definition}


\begin{remarkb}
\acp{SBS} operating according to such caching policy will cache only geographically close (and, therefore, popular) files, in turn aiming at reducing the overhead associated with pre-fetching files from the \acp{BN}.
\end{remarkb}

\begin{remarkb}
Similarly to what has been previously observed for the request region, as $R_{\mathrm{C}} \rightarrow \infty$, we note that such caching policy will always converge to storing globally popular files as in {\rm \cite{Bas15}}.
\end{remarkb}

\noindent Finally, \red{the following Lemma formalizes the cache-hit probability under the described caching model.}

\begin{lemma} \label{lem:P_hit}
The cache-hit probability is \red{given by}
\begin{equation} \label{eq:P_hit}
\Phit = \frac{1}{F} \sum_{i=1}^{S} \big( 1-e^{- p_{i} \eta \pi R_{\mathrm{R}}^{2}} \big) \big( 1 - e^{-p_{i} \eta \pi R_{\mathrm{C}}^{2}} \big).
\end{equation}
\end{lemma}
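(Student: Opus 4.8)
The plan is to split the cache-hit event over the identity of the requested file, evaluate each elementary event through the void probability of the thinned file \acp{PPP}, and then recombine. Fixing a typical \ac{DL} node $d(x)$ together with its serving \ac{SBS} $x$, I would condition on which catalog file is requested. Treating each of the $F$ files as an equally likely target of the request and averaging uniformly over the catalog produces the prefactor $1/F$ in \eqref{eq:P_hit}. Conditioned on the target being $f_i$, a cache hit requires two events to hold at once: that $f_i$ is genuinely available to be requested, i.e.\ $f_i \in \mathcal{R}_{d(x)}$, and that $f_i$ is actually stored, i.e.\ $f_i \in \mathcal{C}_x$ with $i \le S$ as imposed by Definition~\ref{def:C}. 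The constraint $i \le S$ truncates the summation to $i = 1, \dots, S$, since by the caching policy no file with index exceeding $S$ is ever cached.

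Next I would compute the two geometric probabilities from the structure of $\Psi_{\setF}$. By the independent marking/thinning construction, the points carrying file $f_i$ form a homogeneous \ac{PPP} of intensity $p_i \eta$. The event $f_i \in \mathcal{R}_{d(x)}$ is precisely the event that this thinned process places at least one point inside the disk $\mathcal{B}(d(x), R_{\mathrm{R}})$ of area $\pi R_{\mathrm{R}}^2$; the \ac{PPP} void probability makes its complement equal to $e^{-p_i \eta \pi R_{\mathrm{R}}^2}$, so $\Pr[f_i \in \mathcal{R}_{d(x)}] = 1 - e^{-p_i \eta \pi R_{\mathrm{R}}^2}$. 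The same argument applied to the disk $\mathcal{B}(x, R_{\mathrm{C}})$ of area $\pi R_{\mathrm{C}}^2$ gives $\Pr[f_i \in \mathcal{C}_x] = 1 - e^{-p_i \eta \pi R_{\mathrm{C}}^2}$. Inserting these two factors into the conditional cache-hit probability and summing over $i = 1, \dots, S$ reproduces \eqref{eq:P_hit}.

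The delicate point, and the one I expect to be the main obstacle, is the factorization of the request and caching events into a product. Both are read off the same file-$f_i$ process, but through two disks centred respectively at $d(x)$ and at $x$, whose centres are separated by the fixed distance $R_{\dl}$; whenever these disks overlap the two occupancy events are positively correlated through their common region, and the exact joint probability exceeds the product of the marginals. Obtaining the clean product in \eqref{eq:P_hit} therefore relies on invoking the independence of a \ac{PPP} over disjoint sets, i.e.\ treating the request and cache regions as effectively disjoint, or equivalently adopting an independence approximation in keeping with the non-cooperative caching model. I would make this assumption explicit and verify that it is the sole place where correlation is discarded; the remainder of the derivation is a direct application of the Poisson void probability.
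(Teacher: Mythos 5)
Your proof follows essentially the same route as the paper's: thinning the file process per content, applying the Poisson void probability to each of the two disks, multiplying the two occupancy probabilities, truncating the sum at $S$ via the caching policy, and normalizing by $F$. Your closing remark correctly pinpoints the one step the paper leaves implicit --- it simply asserts that $f_i$ falls \emph{independently} into the request and potential cache regions, even though these disks generally overlap (indeed, with the paper's parameters the request region lies inside the cache region), so the product of marginals is a modeling assumption rather than an exact consequence of the \ac{PPP} construction.
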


\begin{proof}
By definition, each file $f_i$ is distributed according to a thinned PPP with spatial density $p_i \eta$. Hence, we can straightforwardly infer that the probabilities of $f_i$ falling independently into the request region $\mathcal{R}_{d(x)}$ and into the potential cache region $\mathcal{C}_{x}$ are $1-e^{- p_{i} \eta \pi R_{\mathrm{R}}^{2}}$ and $1 - e^{-p_{i} \eta \pi R_{\mathrm{C}}^{2}}$, respectively. Assume now that $S \rightarrow \infty$, i.e., the \acp{SBS} are equipped with unlimited storage. Then, in this case, the cache-hit probability of file $f_i$ can be derived as the probability of file $f_i$ falling into both \red{the request region and the potential cache region}, which is readily given by $\big(1-e^{- p_{i} \eta \pi R_{\mathrm{R}}^{2}} \big) \big(1 - e^{-p_{i} \eta \pi R_{\mathrm{C}}^{2}} \big)$. Finally, considering \red{the totality of the contents included} in the global file catalog $\mathcal{F}$ and imposing storage constrains from Definition~\ref{def:C} yields the expression in \eqref{eq:P_hit}. \hspace{\fill} \qed
\end{proof}

\begin{remarkb}
Note that, in \red{non-cooperative caching settings, the maximization of $\Phit$ is straightforwardly achieved by caching the $S$ most popular files} at the \acp{SBS}. 
\end{remarkb}

\section{Performance Analysis} \label{sec:PA}

In this section, \red{we use tools from stochastic geometry to analyze the system-level performance enhancements brought by the considered cache-aided \ac{FD} network over its cache-free counterpart}. This choice provides analytical tractability of the problem and is crucial to guarantee the generality of our results. As main performance metric, we study the probability that a \ac{DL} node successfully receives a requested content, either through a direct transmission from its associated \ac{SBS} or with the aid of the corresponding \ac{UL} node. We term this metric as \textit{probability of successful transmission}, which is denoted by $\Psuc (\cdot)$. In this context, it is convenient to recall that the delivery of a requested file will be performed over different links depending on the occurrence of a cache-hit event. In particular:
\begin{itemize}
	\item[$\bullet$] \textit{Cache-hit event}: the transmission involves one hop, i.e., from the \ac{SBS} to the \ac{DL} node;
	\item[$\bullet$] \textit{Cache-miss event}: the transmission requires two hops, i.e., first from the \ac{UL} node to the \ac{SBS} and then from the latter to the \ac{DL} node through the \ac{SBS} (which \red{introduces} additional interference).
\end{itemize}
In our analysis, we focus on a \textit{typical \ac{SBS}}, indexed by $x$, and its marks $u(x)$ and $d(x)$, referred to as \textit{typical \ac{UL} node} and \textit{typical \ac{DL} node}, respectively. \red{Building on Slivnyak's theorem \cite[Ch.~8.5]{Hae12} and on} the stationarity of $\Phi_{\sc}$ (resp. of $\Phi_{\dl}$), the statistics of the typical \ac{SBS}'s (resp. of the typical \ac{DL} node's) signal reception are representative of the statistics seen by any \ac{SBS} (resp. by any \ac{DL} node) in the system.

Switching our focus back to $\Psuc (\cdot)$, we \red{consider} that a requested file is successfully received by the typical \ac{DL} node (i.e., through the two-hop communication link involving the typical \ac{UL} node, the typical \ac{SBS}, and the typical \ac{DL} node) if $\sir_{x} > \theta \land \sir_{d(x)} > \theta$, with $\theta$ defined as a target \ac{SIR} threshold. Additionally, we \red{consider} that the correct reception of the requested file over one hop uniquely depends on the \ac{SIR} experienced at the receiver, regardless of the considered hop. \red{For simplicity, and without loss of generality, we assume the same \ac{SIR} threshold for both \ac{UL} and \ac{DL} directions}.

Now, thanks to the caching capabilities at the typical \ac{SBS}, we can state that the \ac{UL} communication does not occur with probability $\Phit$. We can then express the probability of successful transmission as
\begin{align}
\label{eq:P_suc} \Psuc (\theta) \triangleq \Phit & \mathbb{P}(\sir_{d(x)} > \theta) + (1-\Phit) \mathbb{P}(\sir_{x} > \theta, \sir_{d(x)} > \theta).
\end{align}
Building upon this definition, other useful performance metrics can be expressed \red{in terms} of probability of successful transmission. Noteworthy examples are the outage probability, given by $\mathsf{P}_{\mathrm{out}} (\theta) \triangleq 1- \Psuc(\theta)$, and the achievable \ac{ASE}, defined as $\mathsf{ASE}(\theta) \triangleq \lambda \Psuc(\theta) \log_{2}(1 + \theta)$ (measured in [bps/Hz/m$^2$]).

Before proceeding with our analysis, we provide some useful preliminary definitions for the sake of notational simplicity in the remainder of the section: 
\begin{align}
\label{eq:upsilon_hat} \widehat{\Upsilon} (s) & \triangleq \frac{\pi (s \rho_{\dl})^{\frac{2}{\alpha_{1}}} \csc \big( \frac{2 \pi}{\alpha_{1}} \big)}{\alpha_{1}}, \\
\widetilde{\Upsilon} (s) & \triangleq \int_{0}^{\infty} \bigg( 1 - \frac{1}{1 + s \rho_{\dl} r^{- \alpha_{1}}} \Xi (s,r) \bigg) r \diff r, \\
\Xi (s,r) & \triangleq \frac{1}{2 \pi} \int_{0}^{2 \pi} \frac{\diff \varphi}{1 + s \rho_{\ul} (R_{\ul}^{2} + r^{2} + 2 R_{\ul} r \cos \varphi)^{- \frac{\alpha_{2}}{2}}}.
\end{align}
Recalling the expressions of $I_{x}$ and $I_{d(x)}$ in \eqref{eq:I_x} and \eqref{eq:I_dx}, respectively, a tight analytical lower bound on $\Psuc (\theta)$ is provided next in Theorem~\ref{th:P_sucLB}, with additional properties given in Corollary~\ref{cor:P_suc}.
\begin{theorem} \label{th:P_sucLB}
The probability of successful transmission in \eqref{eq:P_suc} is bounded as
$\Psuc (\theta) \geq \PsucLB (\theta)$, with
\begin{equation}
\label{eq:P_sucLB} \PsucLB (\theta) \triangleq \Phit \setL_{I_{d(x)}} (\theta \rho_{\dl}^{-1} R_{\dl}^{\alpha_{1}}) + (1 - \Phit) \setL_{I_{x}}^{\slashed{\Delta}_{x}} (\theta \rho_{\ul}^{-1} R_{\ul}^{\alpha_{1}}) \setL_{I_{d(x)}}^{\slashed{\Delta}_{x}} (\theta \rho_{\dl}^{-1} R_{\dl}^{\alpha_{1}})
\end{equation}
where $\setL_{I_{d(x)}} (s)$ is the Laplace transform of the interference observed at \ac{DL} node $d(x)$ in case of cache hit, whereas $\setL_{I_{x}}^{\slashed{\Delta}_{x}} (s)$ and $\setL_{I_{d(x)}}^{\slashed{\Delta}_{x}} (s)$ are the Laplace transforms of the interference observed  at \ac{SBS} $x$ and at \ac{DL} node $d(x)$, respectively, in case of cache-miss:
\begin{align}
\label{eq:L_dx} \mathcal{L}_{I_{d(x)}} (s) & \triangleq \exp \big( - 2 \pi \lambda \Phit \widehat{\Upsilon} (s) \big) \exp \big( - 2 \pi \lambda (1 - \Phit) \widetilde{\Upsilon} (s) \big), \\
\label{eq:L_x_D} \mathcal{L}_{I_{x}}^{\slashed{\Delta}_{x}} (s) & \triangleq \frac{1}{(1 + s \rho_{\dl} b)^{a}} \mathcal{L}_{I_{d(x)}}(s), \\
\label{eq:L_dx_D} \mathcal{L}_{I_{d(x)}}^{\slashed{\Delta}_{x}} (s) & \triangleq \Xi (s, R_{\dl}) \setL_{I_{d(x)}}(s).
\end{align}
\end{theorem}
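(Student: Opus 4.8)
The plan is to turn each probability in \eqref{eq:P_suc} into a Laplace transform of the relevant aggregate interference, exploiting that the useful-signal gains $h_{u(x),x}$ and $h_{x,d(x)}$ are unit-mean exponential and independent of everything entering $I_{x}$ and $I_{d(x)}$. For a link with signal power factor $P$, the event $\{\sir>\theta\}$ is $\{h>\theta P^{-1}I\}$, so conditioning on $I$ and using $\Pr(h>t)=e^{-t}$ gives $\Pr(\sir>\theta)=\Exp[e^{-\theta P^{-1}I}]=\setL_{I}(\theta P^{-1})$. In the cache-hit branch ($\mathbbm{1}_{\slashed{\Delta}_{x}}=0$) this is exact: with $P=\rho_{\dl}R_{\dl}^{-\alpha_{1}}$ it yields $\Pr(\sir_{d(x)}>\theta)=\setL_{I_{d(x)}}(\theta\rho_{\dl}^{-1}R_{\dl}^{\alpha_{1}})$. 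In the cache-miss branch I would condition on the marked point process and on all fading except the two useful gains; since $h_{u(x),x}$ and $h_{x,d(x)}$ are independent exponentials, the conditional joint success probability factors as $e^{-s_{1}I_{x}}e^{-s_{2}I_{d(x)}}$ with $s_{1}=\theta\rho_{\ul}^{-1}R_{\ul}^{\alpha_{1}}$ and $s_{2}=\theta\rho_{\dl}^{-1}R_{\dl}^{\alpha_{1}}$, whence $\Pr(\sir_{x}>\theta,\sir_{d(x)}>\theta)=\Exp[e^{-s_{1}I_{x}-s_{2}I_{d(x)}}]$.

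The step that produces the \emph{bound}, and the main obstacle, is the passage from this exact joint transform to the product $\setL_{I_{x}}^{\slashed{\Delta}_{x}}(s_{1})\setL_{I_{d(x)}}^{\slashed{\Delta}_{x}}(s_{2})$. The two interference terms are correlated because they are driven by the same realization of $\Phi_{\sc}$ and of its marks, and this correlation is \emph{positive}: $I_{x}$ and $I_{d(x)}$ are nonnegative, nondecreasing functionals of the marked configuration (adding an interferer only increases each aggregate), so $e^{-s_{1}I_{x}}$ and $e^{-s_{2}I_{d(x)}}$ are nonincreasing functionals and are positively associated under the FKG inequality for Poisson processes \cite{Hae12}. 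This gives $\Exp[e^{-s_{1}I_{x}-s_{2}I_{d(x)}}]\ge\Exp[e^{-s_{1}I_{x}}]\,\Exp[e^{-s_{2}I_{d(x)}}]=\setL_{I_{x}}^{\slashed{\Delta}_{x}}(s_{1})\,\setL_{I_{d(x)}}^{\slashed{\Delta}_{x}}(s_{2})$. The self-terms tied to the typical cell ($h_{x,x}$ in $I_{x}$ and the $u(x)\!\to\!d(x)$ link in $I_{d(x)}$) are mutually independent and independent of $\Phi_{\sc}\setminus\{x\}$, so they factor out identically and do not affect the direction of the bound. Weighting the product by $1-\Phit$ and adding the exact cache-hit term (weight $\Phit$) then gives \eqref{eq:P_sucLB}.

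It remains to evaluate the marginal transforms. For $\setL_{I_{d(x)}}(s)$ I would apply the probability generating functional of the marked PPP with the Rayleigh-fading average $\Exp_{h}[e^{-sPr^{-\alpha}h}]=(1+sPr^{-\alpha})^{-1}$. The i.i.d. cache marks $\mathbbm{1}_{\slashed{\Delta}_{y}}\sim\mathrm{Bernoulli}(1-\Phit)$ thin the interfering cells into two independent processes: cache-hit cells (intensity $\lambda\Phit$) contribute only the serving-SBS downlink interference, whose per-cell integral $\int_{0}^{\infty}\big(1-(1+s\rho_{\dl}r^{-\alpha_{1}})^{-1}\big)r\,\diff r$ evaluates to $\widehat{\Upsilon}(s)$ through the standard identity $\int_{0}^{\infty}t^{2/\alpha_{1}-1}(1+t)^{-1}\diff t=\pi\csc(2\pi/\alpha_{1})$; cache-miss cells (intensity $\lambda(1-\Phit)$) additionally carry an active \ac{UL}-node interferer, whose contribution, after averaging the isotropic \ac{UL}-node angle, is the factor $\Xi(s,r)$, yielding the per-cell integral $\widetilde{\Upsilon}(s)$. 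Exponentiating the two thinned contributions gives \eqref{eq:L_dx}.

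Finally, I would account for the typical-cell self-terms that are active under cache-miss. The \ac{SI} term contributes $\Exp[e^{-s\rho_{\dl}h_{x,x}}]=(1+s\rho_{\dl}b)^{-a}$ since $h_{x,x}\sim\Gamma(a,b)$, while the residual \ac{INI} from $u(x)$ to $d(x)$ contributes $\Xi(s,R_{\dl})$ after averaging the angle between the fixed-length links $R_{\ul}$ and $R_{\dl}$. By the stationarity of $\Phi_{\sc}$ (Slivnyak's theorem), the aggregate network-interference transform seen at the typical \ac{SBS} coincides with the one seen at the typical \ac{DL} node, namely $\setL_{I_{d(x)}}(s)$; multiplying it by the two self-term factors produces \eqref{eq:L_x_D} and \eqref{eq:L_dx_D}. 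Substituting $s_{1}=\theta\rho_{\ul}^{-1}R_{\ul}^{\alpha_{1}}$ and $s_{2}=\theta\rho_{\dl}^{-1}R_{\dl}^{\alpha_{1}}$ into the assembled expression completes the proof.
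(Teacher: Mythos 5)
Your proposal is correct and follows essentially the same route as the paper: the success probabilities are rewritten as Laplace transforms via the exponential useful-signal gains, the joint two-hop probability is lower-bounded by the product of marginals through the FKG inequality, and the marginal transforms are evaluated with the probability generating functional after independently thinning the interfering cells into cache-hit and cache-miss populations, with the self-interference and residual inter-node-interference factors handled separately. The only difference is cosmetic: you spell out the positive-association argument behind the FKG step, which the paper delegates to the reference \cite{Atz17}.
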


\begin{proof}
The construction of \eqref{eq:P_sucLB} relies on the assumption of uncorrelated locations of the \ac{UL} and \ac{DL} nodes \red{in presence of a cache-miss event}. As a matter of fact, \red{according to} the Fortuin-Kasteleyn-Ginibre (FKG) inequality \cite[Ch.~10.4.2]{Hae12}, such uncorrelated case \red{yields} a lower bound on the network performance for the correlated case (we refer to \cite{Atz17} for further details). Therefore, given $\Psuc (\theta)$ in \eqref{eq:P_suc}, we can write
\begin{align} \label{eq:P_suc1}
\Psuc (\theta) \geq \Phit \mathsf{P}_{\mathrm{suc},2}(\theta) + (1-\Phit) \mathsf{P}_{\mathrm{suc},1}^{\slashed{\Delta}_{x}}(\theta) \mathsf{P}_{\mathrm{suc},2}^{\slashed{\Delta}_{x}}(\theta)
\end{align}
where $\mathsf{P}_{\mathrm{suc},2}(\theta)$ \red{represents} the probability of successfully transmitting a requested file from the typical \ac{SBS} to the typical \ac{DL} node in case of \red{a cache-hit event}, and $\mathsf{P}_{\mathrm{suc},1}^{\slashed{\Delta}_{x}}(\theta)$ (resp. $\mathsf{P}_{\mathrm{suc},2}^{\slashed{\Delta}_{x}}(\theta)$) denotes the probability of successfully transmitting a requested file from the typical \ac{UL} node to the typical \ac{SBS} (resp. from the typical \ac{SBS} to the typical \ac{DL} node) in case of \red{a cache-miss event}.

We begin by focusing on the latter components, i.e., $\mathsf{P}_{\mathrm{suc},1}^{\slashed{\Delta}_{x}}(\theta)$ and $\mathsf{P}_{\mathrm{suc},2}^{\slashed{\Delta}_{x}}(\theta)$. In particular, $\mathsf{P}_{\mathrm{suc},1}^{\slashed{\Delta}_{x}}(\theta)$ is obtained as the Laplace transform of $I_{x}$ in \eqref{eq:I_x} in presence of \ac{SI} \cite[Th.~1]{Atz17}, which is given by $\setL_{I_{x}}^{\slashed{\Delta}_{x}}(s)$ in \eqref{eq:L_x_D}. Likewise, $\mathsf{P}_{\mathrm{suc},2}^{\slashed{\Delta}_{x}}(\theta)$ is obtained as the Laplace transform of $I_{d(x)}$ in \eqref{eq:I_dx} in presence of \ac{INI}, which is given by $\setL_{I_{d(x)}}^{\slashed{\Delta}_{x}}(s)$ in \eqref{eq:L_dx_D}. Following a similar approach, $\mathsf{P}_{\mathrm{suc},2}(\theta)$ can be obtained as the Laplace transform of $I_{d(x)}$ in \eqref{eq:I_dx} in absence of INI, which is given by $\setL_{I_{d(x)}} (s)$. Now, \red{let us define $\widetilde{\Phi}_{\sc} \triangleq \{ x \in \Phi_{\sc} : \slashed{\Delta}_{x} \}$ and $\widehat{\Phi}_{\sc} \triangleq \Phi_{\sc} \backslash \widetilde{\Phi}_{\sc}$, which are, by definition,} independent \acp{PPP} with \red{spatial} densities $\Phit \lambda$ and $(1-\Phit) \lambda$, respectively. As a consequence, $\setL_{I_{d(x)}} (s)$ in \eqref{eq:L_dx} can be derived as
\begin{align}
\setL_{I_{d(x)}} (s) & = \Exp [e^{- s I_{d(x)}}] \\
& = \Exp \bigg[ \exp \bigg( - s \sum_{y \in \Phi_{\sc} \backslash \{x\}} \big( \rho_{\dl} r_{y d(x)}^{- \alpha_{1}} h_{y d(x)} + \rho_{\ul} r_{u(y) d(x)}^{- \alpha_{1}} h_{u(y) d(x)} \mathbbm{1}_{\slashed{\Delta}_{y}} \big) \bigg) \bigg] \\
& = \Exp \bigg[ \prod_{y \in \Phi_{\sc} \backslash \{x\}} \exp \bigg( - s \big( \rho_{\dl} r_{y d(x)}^{- \alpha_{1}} h_{y d(x)} + \rho_{\ul} r_{u(y) d(x)}^{- \alpha_{1}} h_{u(y) d(x)} \mathbbm{1}_{\slashed{\Delta}_{x}} \big) \bigg) \bigg] \\
\nonumber & = \Exp \bigg[ \prod_{y \in \widehat{\Phi}_{\sc} \backslash \{x\}} \exp \big( - s \rho_{\dl} r_{y d(x)}^{- \alpha_{1}} h_{y d(x)} \big) \bigg] \\ & \phantom{=} \ \times \Exp \bigg[ \prod_{y \in \widetilde{\Phi}_{\sc} \backslash \{x\}} \exp \bigg( - s \big( \rho_{\dl} r_{y d(x)}^{- \alpha_{1}} h_{y d(x)} + \rho_{\ul} r_{u(y) d(x)}^{- \alpha_{1}} h_{u(y) d(x)} \big) \bigg) \bigg]
\end{align}
and, using the moment-generating function of the exponential distribution, we obtain
\begin{align}
\nonumber \setL_{I_{d(x)}} (s) & = \Exp_{\widehat{\Phi}_{\sc}} \bigg[ \prod_{y \in \widehat{\Phi}_{\sc} \backslash \{x\}} \frac{1}{1 + s \rho_{\dl} r_{y d(x)}^{- \alpha_{1}}} \bigg] \\
\label{eq:L_Id_1} & \phantom{=} \ \times \Exp_{\widetilde{\Phi}_{\sc}} \bigg[ \prod_{y \in \widetilde{\Phi}_{\sc} \backslash \{x\}} \frac{1}{1 + s \rho_{\dl} r_{y d(x)}^{- \alpha_{1}}} \frac{1}{1 + \rho_{\ul} r_{u(y) d(x)}^{- \alpha_{1}}} \bigg].
\end{align}
Then, applying the probability generating functional of a PPP \cite[Ch.~4.3]{Hae12} yields
\begin{align}
\nonumber \setL_{I_{d(x)}} (s) & = \exp \bigg( - 2 \pi \lambda \Phit \int_{0}^{\infty} \bigg( 1 - \frac{1}{1 + s \rho_{\dl} r^{- \alpha_{1}}} \bigg) r \diff r \bigg) \\
\label{eq:L_Id_2} & \phantom{=} \ \times \exp \bigg( - 2 \pi \lambda (1 - \Phit) \int_{0}^{\infty} \bigg( 1 - \frac{1}{1 + s \rho_{\dl} r^{- \alpha_{1}}} \Xi (s,r) \bigg) r \diff r \bigg).
\end{align}
Finally, the integral appearing in the first exponential of \eqref{eq:L_Id_2} has a closed-form solution given by $\widehat{\Upsilon}(s)$ in \eqref{eq:upsilon_hat}. This concludes the proof. \hspace{\fill} \qed
\end{proof}

\begin{corollary} \label{cor:P_suc}
The lower bound on the probability of successful transmission in \eqref{eq:P_sucLB} is characterized by the following properties:
\begin{itemize}
\item[(a)] \hspace{1mm} $\PsucLB (\theta) \to \Psuc (\theta)$ as $\Phit \to 1$;
\item[(b)] \hspace{1mm} $\PsucLB (\theta) = \Psuc (\theta)$ in case of uncorrelated locations of the nodes between \ac{UL} and \ac{DL} communications.
\end{itemize}
\end{corollary}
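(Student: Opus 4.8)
The plan is to trace the single source of looseness in the construction of $\PsucLB(\theta)$ in Theorem~\ref{th:P_sucLB}---namely the lone application of the FKG inequality to the cache-miss term---and to show that it collapses to an equality in each of the two regimes claimed by the corollary. First I would observe that the cache-hit summands of $\Psuc(\theta)$ in \eqref{eq:P_suc} and of $\PsucLB(\theta)$ in \eqref{eq:P_sucLB} are \emph{identical}, not merely ordered: since the useful-signal fading $h_{x,d(x)}$ is exponential, the coverage probability satisfies $\mathbb{P}(\sir_{d(x)} > \theta) = \Exp[e^{-\theta \rho_{\dl}^{-1} R_{\dl}^{\alpha_{1}} I_{d(x)}}] = \setL_{I_{d(x)}}(\theta \rho_{\dl}^{-1} R_{\dl}^{\alpha_{1}})$, evaluated with $\mathbbm{1}_{\slashed{\Delta}_{x}} = 0$ in the cache-hit event. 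The same exponential-fading argument identifies each marginal cache-miss probability with its Laplace transform, i.e.\ $\mathsf{P}_{\mathrm{suc},1}^{\slashed{\Delta}_{x}}(\theta) = \setL_{I_{x}}^{\slashed{\Delta}_{x}}(\theta \rho_{\ul}^{-1} R_{\ul}^{\alpha_{1}})$ and $\mathsf{P}_{\mathrm{suc},2}^{\slashed{\Delta}_{x}}(\theta) = \setL_{I_{d(x)}}^{\slashed{\Delta}_{x}}(\theta \rho_{\dl}^{-1} R_{\dl}^{\alpha_{1}})$.

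Consequently, the entire gap between the exact quantity and its bound is carried by the cache-miss term, and reading off \eqref{eq:P_suc} together with the definitions behind \eqref{eq:P_suc1} I would write
\begin{align}
\label{eq:gap} \Psuc(\theta) - \PsucLB(\theta) = (1-\Phit)\big[ \mathbb{P}(\sir_{x} > \theta, \sir_{d(x)} > \theta) - \mathsf{P}_{\mathrm{suc},1}^{\slashed{\Delta}_{x}}(\theta)\, \mathsf{P}_{\mathrm{suc},2}^{\slashed{\Delta}_{x}}(\theta) \big].
\end{align}
For part (a), I would note that the bracketed quantity is the difference between a joint probability and the corresponding product of marginals, hence bounded by $1$ in absolute value uniformly over all system parameters (and in fact nonnegative by FKG). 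Letting $\Phit \to 1$ drives the prefactor $(1-\Phit)$ to zero, so the right-hand side of \eqref{eq:gap} vanishes and $\PsucLB(\theta) \to \Psuc(\theta)$.

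For part (b), under the assumption of uncorrelated node locations between the \ac{UL} and \ac{DL} communications the interference $I_{x}$ at the typical \ac{SBS} and the interference $I_{d(x)}$ at the typical \ac{DL} node are driven by independent randomness, so the events $\{\sir_{x} > \theta\}$ and $\{\sir_{d(x)} > \theta\}$ become statistically independent. The joint probability then factorizes exactly, $\mathbb{P}(\sir_{x} > \theta, \sir_{d(x)} > \theta) = \mathsf{P}_{\mathrm{suc},1}^{\slashed{\Delta}_{x}}(\theta)\, \mathsf{P}_{\mathrm{suc},2}^{\slashed{\Delta}_{x}}(\theta)$, the bracket in \eqref{eq:gap} is identically zero, and $\PsucLB(\theta) = \Psuc(\theta)$.

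The step I expect to be the main obstacle is making the independence claim in part (b) fully rigorous. The two \ac{SIR} events share randomness only through the geometry---the common interferer positions feeding both $I_{x}$ and $I_{d(x)}$ in \eqref{eq:I_x} and \eqref{eq:I_dx}---and through the per-point cache-miss indicators $\mathbbm{1}_{\slashed{\Delta}_{y}}$ that simultaneously gate each interferer's contribution to both sums, whereas the fading gains on the two distinct link sets are already mutually independent. I would therefore need to argue that the uncorrelated-locations model effectively replaces the single interference field by two independent copies, eliminating both couplings at once and rendering $\{\sir_{x} > \theta\}$ and $\{\sir_{d(x)} > \theta\}$ exactly independent; this is precisely the condition under which the FKG inequality invoked in the proof of Theorem~\ref{th:P_sucLB} holds with equality rather than as a strict bound.
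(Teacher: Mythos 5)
Your proposal is correct and is essentially the argument the paper intends: the paper's proof of Corollary~\ref{cor:P_suc} is simply the observation that the only inequality in Theorem~\ref{th:P_sucLB} is the FKG step applied to the cache-miss joint probability, which is weighted by $(1-\Phit)$ (giving part (a)) and which holds with equality when the uncorrelated-locations assumption makes $\{\sir_{x}>\theta\}$ and $\{\sir_{d(x)}>\theta\}$ independent (giving part (b)). Your expanded write-up, including the identification of the cache-hit term as exact via the exponential-fading/Laplace-transform correspondence, fills in exactly the details the paper leaves implicit.
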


\begin{proof}
The proof \red{follows directly} from Theorem~\ref{th:P_sucLB}. \hspace{\fill} \qed
\end{proof}

Lastly, we introduce the \textit{\ac{FD} throughput gain}, which will be used as a performance metric in Section~\ref{sec:NUM}, defined as
\begin{align}
\label{eq:TG} \mathsf{TG}_{\mathrm{FD}} (\theta) \triangleq 2 \Psuc (\theta) \exp \bigg( 2 \pi \lambda \frac{\pi \theta^{\frac{2}{\alpha_{1}}} \big( R_{\ul}^{2} + R_{\dl}^{2} \big) \csc \big( \frac{2 \pi}{\alpha_{1}} \big)}{\alpha_{1}} \bigg).
\end{align}
This performance metric quantifies the throughput gain of a cache-aided small-cell network operating in \ac{FD} mode as compared to its cache-free \ac{HD} counterpart by relating the probability of successful transmission $\Psuc (\theta)$ in the two settings and taking into account the \red{theoretical FD throughput doubling} (we refer to \cite{Atz17} for details). In particular, we note that the \ac{FD} setting outperforms its \ac{HD} counterpart when $\mathsf{TG}_{\mathrm{FD}} (\theta) > 1$.

\bgroup
\def\arraystretch{1.6}%
\begin{table}[!t]
	\caption{\red{System parameters} used in the simulations. \vspace{3mm}}
	\label{tab:simparams}
	\centering
	\normalsize
	\scalebox{0.8}{
		\begin{tabular}{|c|c|c|}
			\hline
			{\bf System Parameter}													& {\bf Symbol} 				& {\bf Value}			\\
			\hline
			\hline
			\hline
			Radius of request region 												& $R_{\mathrm{R}}$ 			& $8$ m 				\\
			Radius of potential cache region 										& $R_{\mathrm{C}}$			& $40$ m 				\\
			Catalog shape parameter 												& $\gamma$ 					& $0.7$ 				\\
			Storage-to-catalog ratio 												& $\kappa$ 					& $\{0.1, 0.35, 0.6\}$	\\
			\hline
			\hline
			Distance \ac{UL} node--\ac{SBS} 										& $R_{\mathrm{UL}}$  		& $20$ m 				\\
			Distance \ac{SBS}--\ac{DL} node 										& $R_{\mathrm{DL}}$  		& $5$ m 				\\
			Transmit power of \ac{UL} nodes   										& $\rho_{\mathrm{UL}}$  	& $30$ dBm 				\\
			Transmit power of \ac{DL} nodes   										& $\rho_{\mathrm{DL}}$ 	& $24$ dBm			 	\\
			Pathloss exponent \ac{UL} nodes--\acp{SBS}/\acp{SBS}--\ac{DL} nodes	& $\alpha_1$ 				& $3$			 		\\
			Pathloss exponent \ac{UL} nodes--\acp{DL} nodes						& $\alpha_2$				& $4$				 	\\
			Target \ac{SIR}   														& $\theta$ 					& $0$ dB 				\\
			\hline
			\hline
			Rician $K$-factor 														& $K$ 						& $1$ 				  	\\
			\ac{SI} attenuation 													& $\Omega$					& $60$ dB 				\\
			\hline
		\end{tabular}
	}
\end{table}
\bgroup
\def\arraystretch{1.0}%

\section{Numerical Results and Discussion} \label{sec:NUM}

%

In this section, \red{we present and discuss numerical results obtained by means of suitable Monte Carlo simulations in order to assess the validity of our theoretical findings. We specifically focus on the analytical expressions obtained in Sections~\ref{sec:CM} and \ref{sec:PA}}.

\red{As commonly assumed in the literature (see, e.g., \cite{Pas16}), the global file catalog follows a Zipf popularity distribution such that the request probability $p_{i} \in \mathcal{P}$ of each file $f_{i} \in \mathcal{F}$ can be written as}
\begin{equation}
p_{i} =  \bigg ( i^{\gamma} \sum_{j=1}^{F} \frac{1}{j^{\gamma}} \bigg )^{-1}
\end{equation}
for a certain catalog shape parameter $\gamma$. Hence, the \acp{SBS} cache contents from the global file catalog (depending on the policy defined in Definition~\ref{def:C}) and serve the corresponding \acp{UE} accordingly. \red{The corresponding storage-to-catalog ratio is defined as $\kappa \triangleq \frac{S}{F} \leq 1$}. The values of the most relevant parameters adopted for the simulations are listed in Table~\ref{tab:simparams}; furthermore, \red{the shape parameter $a$ and the scale parameter $b$ of the \ac{SI} distribution, which appear in \eqref{eq:L_x_D}, are computed from the Rician $K$-factor $K$ and the \ac{SI} attenuation $\Omega$ measured at the \ac{FD} \acp{SBS}} as in \cite[Lem.~1]{Atz17}.

\begin{figure}[!t]
	\centering
	\includegraphics[scale=1]{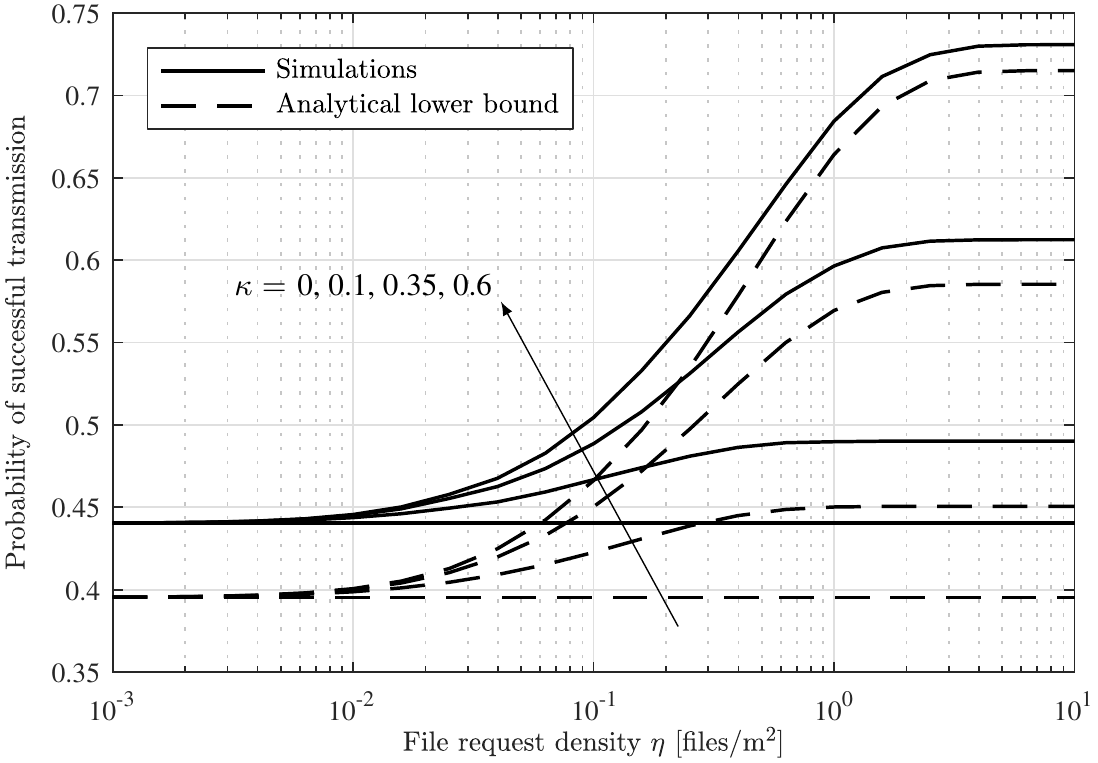}
	\caption{Probability of successful transmission against file request density, with \ac{SBS} density $\lambda=5 \times 10^{-4}$~SBSs/m$^{2}$.} \label{fig:Psuc_VS_eta}
\end{figure}

The probability of successful transmission $\Psuc (\theta)$ in \eqref{eq:P_suc} and its analytical lower bound $\PsucLB (\theta)$ in \eqref{eq:P_sucLB} are illustrated in Fig.~\ref{fig:Psuc_VS_eta} as functions of the file request density $\eta$ for a fixed \ac{SBS} density $\lambda=5 \times 10^{-4}$~SBSs/m$^{2}$. With reference to Corollary~\ref{cor:P_suc}, we recall that $\PsucLB (\theta)$ gives the exact expression of $\Psuc (\theta)$ \red{for uncorrelated locations of the nodes between \ac{UL} and \ac{DL} phases}. Moreover, note that the curves for $\kappa=0$ are related to the cache-free scenario analyzed in \cite{Atz17} (see also Fig.~\ref{fig:Psuc_VS_lambda} and Fig.~\ref{fig:TG_VS_lambda}). Qualitatively, it is evident from Fig.~\ref{fig:Psuc_VS_eta} that the probability of successful transmission grows with both the file request density $\eta$ and the storage-to-catalog ratio $\kappa$. On the one hand, \red{we observe that a higher file request density yields a larger} $\mathsf{P}_{\mathrm{hit}}$, which in turn improves the efficiency of the storage use. On the other hand, the variation experienced by the probability of successful transmission over $\eta$ increases with the storage capabilities at the \acp{SBS}, and so does the tightness of the analytical lower bound. Concerning this last aspect, we note that, even if such bound may look rather loose for $\kappa \leq 0.35$, its quantitative difference with the actual numerical performance never exceeds 10$\%$.

Assume now a file request density $\eta=1$~files/m$^{2}$. Fig.~\ref{fig:Psuc_VS_lambda} plots the probability of successful transmission $\Psuc (\theta)$ in \eqref{eq:P_suc} and its analytical lower bound $\PsucLB (\theta)$ in \eqref{eq:P_sucLB} as functions of the \ac{SBS} density $\lambda$. The analytical lower bound is remarkably tight and, in accordance with Fig.~\ref{fig:Psuc_VS_eta}, becomes increasingly accurate as the storage-to-catalog ratio $\kappa$ grows. Nonetheless, it is even more meaningful to analyze the \ac{FD} throughout gain in \eqref{eq:TG} together with its analytical lower bound (obtained by replacing $\Psuc (\theta)$ with $\PsucLB (\theta)$ in the aforementioned expression), which are illustrated in Fig.~\ref{fig:TG_VS_lambda} against the \ac{SBS} density $\lambda$. \red{In practice, higher \ac{ASE} can be achieved by deploying a very dense \ac{FD} network in which each \ac{SBS} is equipped with suitable caching capabilities. In this respect, we observe that:
\begin{itemize}
	\item a \ac{SBS} density $\lambda=10^{-4}$~SBSs/m$^{2}$ yields $\mathsf{TG}_{\mathrm{FD}} (\theta) = 1.7$ with $\kappa=0$ and $\mathsf{TG}_{\mathrm{FD}} (\theta) = 1.85$ with $\kappa=0.6$;
	\item a \ac{SBS} density $\lambda=10^{-3}$~SBSs/m$^{2}$ yields $\mathsf{TG}_{\mathrm{FD}} (\theta) = 0.42$ with $\kappa=0$ and $\mathsf{TG}_{\mathrm{FD}} (\theta) = 1.11$ with $\kappa=0.6$.
\end{itemize}
It is evident that the optimal tradeoff between the \ac{SBS} density and the storage size installed at each \ac{SBS} must be by found by network planners taking into account the deployment cost of each element; the interested reader may refer to \cite{Atz17a} for further details on this subject.}

\begin{figure}[!t]
	\centering
	\includegraphics[scale=1]{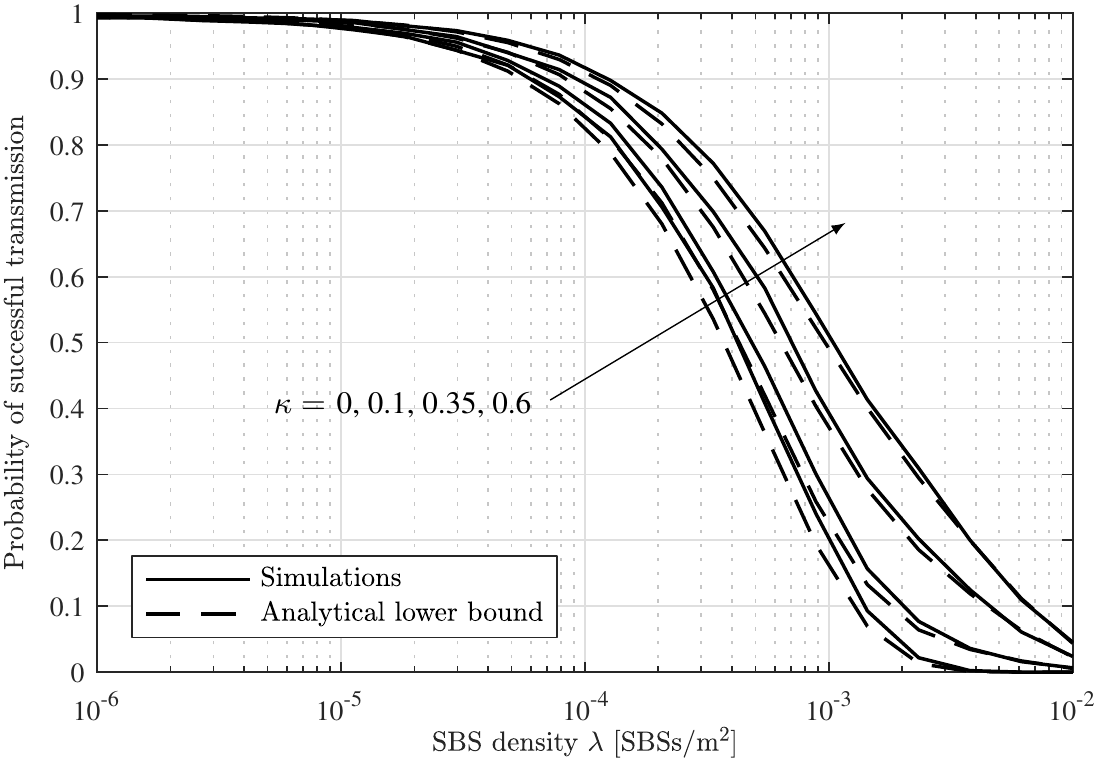}
	\caption{Probability of successful transmission against \ac{SBS} density, with file request density $\eta=1$~files/m$^{2}$.} \label{fig:Psuc_VS_lambda}
\end{figure}

\section{Conclusions} \label{sec:concl}

Several research efforts in both academic and industrial contexts have highlighted that edge caching can provide significant benefits in terms of network performance as, e.g., end-to-end access delay. Conversely, very few straightforward insights can be drawn on the benefits experienced at the physical layer when the network nodes are equipped with caching capabilities. This is mostly due to the complexity of the physical interactions occurring among devices in modern network.

This chapter takes a step forward with respect to the aforementioned position by showing that edge caching can actually offer a remarkable degree of interoperability with one of the most promising technologies for \red{next-generation network deployments}, i.e., \ac{FD} communications. More specifically, we show that integrating caching capabilities at the \ac{FD} \acp{SBS} is a cost-effective means of improving the scalability of the theoretical throughput doubling brought by the \ac{FD} paradigm from the device to the network level.

Our study considers an interference-limited \ac{UDN} setting consisting of several non-cooperative \acp{SBS} with \ac{FD} capabilities, which simultaneously communicate with both their served \acp{UE} and wireless \acp{BN}. In this case, the interference footprint of the \ac{UDN}, already significant by design, is further increased by the \ac{FD} operations. In fact, the latter induce higher levels of inter-cell and inter-node interference as compared to the \ac{HD} scenario, in turn causing a spectral efficiency bottleneck that prevents the theoretical \ac{FD} throughput doubling to occur at the network level. Fundamental results available in the literature show that most of such doubling can be achieved only if the network infrastructure is subject to radical and expensive modifications or if high-rate signaling is exchanged between \acp{UE} over suitable control links. 

\begin{figure}[!t]
	\centering
	\includegraphics[scale=1]{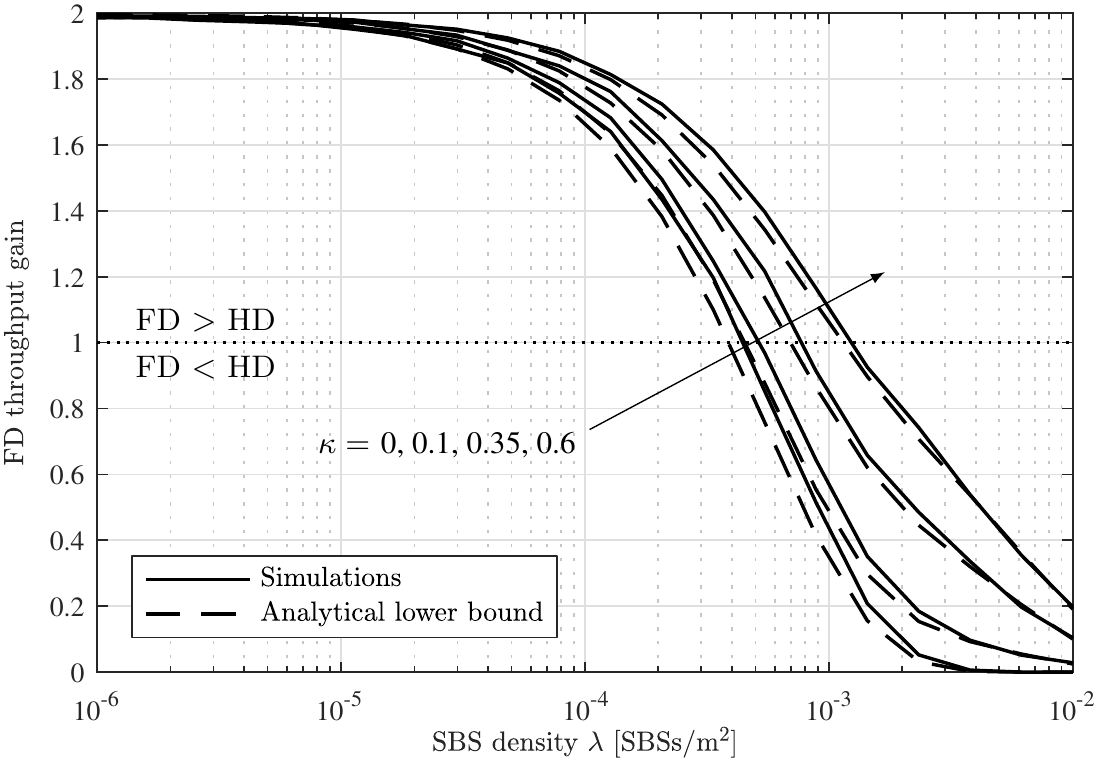}
	\caption{\ac{FD} throughput gain against \ac{SBS} density, with file request density $\eta=1$~files/m$^{2}$.} \label{fig:TG_VS_lambda}
\end{figure}

In this context, we add file storage capabilities to \acp{SBS} and \red{consider a geographical caching policy aiming at capturing local files popularity}, whereby the \acp{SBS} intelligently store popular contents anticipating the \acp{UE}' requests. The rationale of this choice is that the presence of pre-fetched popular files at the \acp{SBS} reduces the need for the latter to retrieve contents from the wireless \acp{BN} upon the \acp{UE}' request. This clearly diminishes the number of transmissions performed by the \acp{BN} towards the \acp{SBS}, in turn reducing the interference footprint of the \ac{UDN}. Remarkably, this low-cost solution can be implemented without the need for additional signaling between the nodes or any infrastructural change.

The performance of such cache-aided \ac{FD} network is characterized in terms of throughput gain as compared to its \ac{HD} counterpart. To this end, two fundamental metrics are identified and analyzed:
\begin{itemize}
	\item[$\bullet$] \red{The cache-hit probability, defined as the probability that any file requested by a given \ac{UE} from its serving \ac{SBS} is cached at the latter};
	\item[$\bullet$] The probability of successful transmission of a file requested by a \ac{UE}, either directly by its serving \ac{SBS} (if present in its cache) or by the corresponding \ac{BN}.
\end{itemize}
In particular, the second metric is used to derive an analytical lower bound on the throughput of the \ac{UDN}. As a final step, we perform a set of suitable numerical simulations to assess the performance enhancement brought by the adoption of edge caching. The obtained results highlight that \red{shifting} from cache-free to cache-aided \acp{UDN} allows to effectively operate the network in \ac{FD} mode while supporting higher \ac{SBS} densities, in turn improving the \ac{ASE}. In other words, the deployment of cache-aided \acp{SBS} has beneficial effects on the network throughput experienced over a given area, thanks to a non-negligible reduction of \red{the aggregate interference observed in the \ac{FD} network in comparison with the cache-free case}.

The results presented in this chapter demonstrate that the interoperability between edge caching and \ac{FD} communications is not only possible, but also desirable from the network throughput perspective. However, from a quantitative point of view, the extent of the benefits may strongly depend on several parameters, such as:
\begin{itemize}
	\item[$\bullet$] Adopted caching policy;
	\item[$\bullet$] \acp{UE} association policy;
	\item[$\bullet$] Mobility of the network nodes, either in the form of moving \acp{SBS} or classic \acp{UE}' dynamics.
\end{itemize}
Therefore, future additional studies and investigations should be performed in these directions to further deepen our understanding of the benefits brought by edge caching to the physical layer of wireless communication networks, especially when the \ac{FD} paradigm is adopted.

\section*{Acknowledgments} \label{sec:ack}

The work of Italo Atzeni was supported by the European Research Council under the Horizon 2020 Programme (ERC 670896 PERFUME).

\clearpage

\addcontentsline{toc}{chapter}{References}
\bibliographystyle{IEEEtran}
\bibliography{IEEEabrv,refs_caching}

\begin{thebibliography}{10}
\providecommand{\url}[1]{#1}
\csname url@samestyle\endcsname
\providecommand{\newblock}{\relax}
\providecommand{\bibinfo}[2]{#2}
\providecommand{\BIBentrySTDinterwordspacing}{\spaceskip=0pt\relax}
\providecommand{\BIBentryALTinterwordstretchfactor}{4}
\providecommand{\BIBentryALTinterwordspacing}{\spaceskip=\fontdimen2\font plus
\BIBentryALTinterwordstretchfactor\fontdimen3\font minus
  \fontdimen4\font\relax}
\providecommand{\BIBforeignlanguage}[2]{{%
\expandafter\ifx\csname l@#1\endcsname\relax
\typeout{** WARNING: IEEEtran.bst: No hyphenation pattern has been}%
\typeout{** loaded for the language `#1'. Using the pattern for}%
\typeout{** the default language instead.}%
\else
\language=\csname l@#1\endcsname
\fi
#2}}
\providecommand{\BIBdecl}{\relax}
\BIBdecl

\bibitem{Cis17}
Cisco, ``{C}isco {V}isual {N}etworking {I}ndex: {G}lobal {M}obile {D}ata
  {T}raffic {F}orecast, 2016–2021,'' \emph{White Paper}, 2017.

\bibitem{And14}
J.~G. Andrews, S.~Buzzi, W.~Choi, S.~V. Hanly, A.~Lozano, A.~C.~K. Soong, and
  J.~C. Zhang, ``What will 5{G} be?'' \emph{{IEEE} J. Sel. Areas Commun.},
  vol.~32, no.~6, pp. 1065--1082, June 2014.

\bibitem{3GPP38211}
3GPP, ``{TR 38.211 (V15.1.0)}: {NR}; physical channels and modulation,'' Tech.
  Rep., Mar. 2018.

\bibitem{3GPP38214}
------, ``{TR 38.214 (V15.1.0)}: {NR}; physical layer procedures for data,''
  Tech. Rep., Mar. 2018.

\bibitem{Boc14}
F.~Boccardi, R.~W. Heath, A.~Lozano, T.~L. Marzetta, and P.~Popovski, ``Five
  disruptive technology directions for {5G},'' \emph{{IEEE} Commun. Mag.},
  vol.~52, no.~2, pp. 74--80, Feb. 2014.

\bibitem{Bjo17}
E.~Björnson, J.~Hoydis, and L.~Sanguinetti, ``Massive {MIMO} networks:
  Spectral, energy, and hardware efficiency,'' \emph{Foundations and Trends in
  Signal Process.}, vol.~11, no. 3--4, pp. 154--655, 2017.

\bibitem{Bas15}
E.~Ba\c{s}tu\u{g}, M.~Bennis, M.~Kountouris, and M.~Debbah, ``Cache-enabled
  small cell networks: modeling and tradeoffs,'' \emph{EURASIP J. Wireless
  Commun. Netw.}, no.~1, pp. 41--51, 2015.

\bibitem{Li18}
K.~Li, C.~Yang, Z.~Chen, and M.~Tao, ``Optimization and analysis of
  probabilistic caching in $n$-tier heterogeneous networks,'' \emph{{IEEE}
  Trans. Wireless Commun.}, vol.~17, no.~2, pp. 1283--1297, Feb. 2018.

\bibitem{Ham16}
K.~Hamidouche, W.~Saad, M.~Debbah, and H.~V. Poor, ``Mean-field games for
  distributed caching in ultra-dense small cell networks,'' in \emph{Proc. IEEE
  American Control Conf. (ACC)}, Boston, MA, USA, Jul. 2016.

\bibitem{Yan18}
C.~Yang, B.~Xia, W.~Xie, K.~Huang, Y.~Yao, and Y.~Zhao, ``Interference
  cancellation at receivers in cache-enabled wireless networks,'' \emph{{IEEE}
  Trans. Veh. Technol.}, vol.~67, no.~1, pp. 842--846, Jan. 2018.

\bibitem{Kri17}
S.~Krishnan and H.~S. Dhillon, ``Effect of user mobility on the performance of
  device-to-device networks with distributed caching,'' vol.~6, no.~2, pp.
  194--197, Apr. 2017.

\bibitem{Pas16}
G.~Paschos, E.~Ba\c{s}tu\u{g}, I.~Land, G.~Caire, and M.~Debbah, ``Wireless
  caching: technical misconceptions and business barriers,'' \emph{{IEEE}
  Commun. Mag.}, vol.~54, no.~8, pp. 16--22, Aug. 2016.

\bibitem{Liu17}
D.~Liu and C.~Yang, ``Caching policy toward maximal success probability and
  area spectral efficiency of cache-enabled {HetNets},'' \emph{{IEEE} Trans.
  Commun.}, vol.~65, no.~6, pp. 2699--2714, June 2017.

\bibitem{Khr16}
A.~Khreishah, J.~Chakareski, and A.~Gharaibeh, ``Joint caching, routing, and
  channel assignment for collaborative small-cell cellular networks,''
  \emph{{IEEE} J. Sel. Areas Commun.}, vol.~34, no.~8, pp. 2275--2284, Aug.
  2016.

\bibitem{Mas17}
M.~Maso, I.~Atzeni, I.~Ghamnia, E.~Ba{\c s}tu{\u g}, and M.~Debbah,
  ``Cache-aided full-duplex small cells,'' in \emph{Proc. Int. Symp. Modeling
  and Optimiz. in Mobile, Ad Hoc, and Wireless Netw. (WiOpt)}, Paris, France,
  May 2017.

\bibitem{Atz17a}
I.~Atzeni, M.~Maso, I.~Ghamnia, E.~Ba\c{s}tu\u{g}, and M.~Debbah, ``Flexible
  cache-aided networks with backhauling,'' in \emph{Proc. {IEEE} Int. Workshop
  Signal Process. Adv. Wireless Commun. (SPAWC)}, Sapporo, Japan, July 2017.

\bibitem{And11}
J.~G. Andrews, F.~Baccelli, and R.~K. Ganti, ``A tractable approach to coverage
  and rate in cellular networks,'' \emph{{IEEE} Trans. Commun.}, vol.~59,
  no.~11, pp. 3122--3134, Nov. 2011.

\bibitem{Dhi12}
H.~S. Dhillon, R.~K. Ganti, F.~Baccelli, and J.~G. Andrews, ``Modeling and
  analysis of {$K$}-tier downlink heterogeneous cellular networks,''
  \emph{{IEEE} J. Sel. Areas Commun.}, vol.~30, no.~3, pp. 550--560, Apr. 2012.

\bibitem{Yun15}
S.~F. Yunas, M.~Valkama, and J.~Niemelä, ``Spectral and energy efficiency of
  ultra-dense networks under different deployment strategies,'' \emph{{IEEE}
  Commun. Mag.}, vol.~53, no.~1, pp. 90--100, Jan. 2015.

\bibitem{Bas14}
E.~Ba\c{s}tu\u{g}, M.~Bennis, and M.~Debbah, ``Living on the {E}dge: The role
  of proactive caching in {5G} wireless networks,'' \emph{{IEEE} Commun. Mag.},
  vol.~52, no.~8, pp. 82--89, Aug. 2014.

\bibitem{Sab14}
A.~Sabharwal, P.~Schniter, D.~Guo, D.~W. Bliss, S.~Rangarajan, and R.~Wichman,
  ``In-band full-duplex wireless: Challenges and opportunities,'' \emph{{IEEE}
  J. Sel. Areas Commun.}, vol.~32, no.~9, pp. 1637--1652, Sept. 2014.

\bibitem{Atz15}
I.~Atzeni and M.~Kountouris, ``Full-duplex {MIMO} small-cell networks:
  Performance analysis,'' in \emph{Proc. {IEEE} Global Commun. Conf.
  (GLOBECOM)}, San Diego, CA, USA, Dec. 2015.

\bibitem{Ton15}
Z.~Tong and M.~Haenggi, ``Throughput analysis for full-duplex wireless networks
  with imperfect self-interference cancellation,'' \emph{{IEEE} Trans.
  Commun.}, vol.~63, no.~11, pp. 4490--4500, Nov. 2015.

\bibitem{Goy15}
S.~Goyal, P.~Liu, S.~S. Panwar, R.~A. Difazio, R.~Yang, and E.~Bala, ``Full
  duplex cellular systems: Will doubling interference prevent doubling
  capacity?'' \emph{{IEEE} Commun. Mag.}, vol.~53, no.~5, pp. 121--127, May
  2015.

\bibitem{Atz17}
I.~Atzeni and M.~Kountouris, ``Full-duplex {MIMO} small-cell networks with
  interference cancellation,'' \emph{{IEEE} Trans. Wireless Commun.}, vol.~16,
  no.~12, pp. 8362--8376, Dec. 2017.

\bibitem{Mas15}
M.~Maso, C.-F. Liu, C.-H. Lee, T.~Q.~S. Quek, and L.~S. Cardoso,
  ``Energy-recycling full-duplex radios for next generation networks,''
  \emph{{IEEE} J. Sel. Areas Commun.}, vol.~33, no.~12, pp. 2948--2962, Dec.
  2015.

\bibitem{Dua12}
M.~Duarte, C.~Dick, and A.~Sabharwal, ``Experiment-driven characterization of
  full-duplex wireless systems,'' \emph{{IEEE} Trans. Wireless Commun.},
  vol.~11, no.~12, pp. 4296--4307, Dec. 2012.

\bibitem{Bha13}
D.~Bharadia, E.~McMilin, and S.~Katti, ``Full duplex radios,'' \emph{ACM
  SIGCOMM Comput. Commun. Review}, vol.~43, no.~4, pp. 375--386, Aug. 2013.

\bibitem{Bha14}
D.~Bharadia and S.~Katti, ``Full duplex {MIMO} radios,'' in \emph{Proc. USENIX
  Symp. Netw. Syst. Design and Implementation (NSDI)}, Seattle, WA, USA, Apr.
  2014.

\bibitem{Cho10}
J.~I. Choi, M.~Jain, K.~Srinivasan, P.~Levis, and S.~Katti, ``Achieving single
  channel, full duplex wireless communication,'' in \emph{Proc. Ann. Int. Conf.
  Mobile Computing and Netw. (MOBICOM)}, Chicago, IL, USA, Sept. 2010.

\bibitem{Kno12}
M.~E. Knox, ``Single antenna full duplex communications using a common
  carrier,'' in \emph{Proc. IEEE Ann. Wireless and Microw. Technology Conf.
  (WAMICON)}, Cocoa Beach, FL, USA, Apr. 2012.

\bibitem{Bha14a}
D.~Bharadia, K.~Joshi, and S.~Katti, ``Robust full duplex radio link,''
  \emph{ACM SIGCOMM Comput. Commun. Review}, vol.~44, no.~4, pp. 9147--148,
  Aug. 2014.

\bibitem{Phu13}
N.~Phungamngern, P.~Uthansakul, and M.~Uthansakul, ``Digital and {RF}
  interference cancellation for single-channel full-duplex transceiver using a
  single antenna,'' in \emph{Int. Conf. Elect. Eng./Electron., Comput.,
  Telecommun. and Inf. Technology (ECTI-CON)}, Krabi, Thailand, May 2013.

\bibitem{Jai11}
M.~Jain, J.~I. Choi, T.~Kim, D.~Bharadia, S.~Seth, K.~Srinivasan, P.~Levis,
  S.~Katti, and P.~Sinha, ``Practical, real-time, full duplex wireless,'' in
  \emph{Proc. Ann. Int. Conf. Mobile Computing and Netw. (MOBICOM)}, Las Vegas,
  NV, USA, Sept. 2011.

\bibitem{Li12}
N.~Li, W.~Zhu, and H.~Han, ``Digital interference cancellation in single
  channel, full duplex wireless communication,'' in \emph{Proc. Int. Conf.
  Wireless Commun., Netw. and Mobile Computing (WiCOM)}, Shanghai, China, Sept.
  2012.

\bibitem{Ahm15}
E.~Ahmed and A.~Eltawil, ``All-digital self-interference cancellation technique
  for full-duplex systems,'' \emph{{IEEE} Trans. Wireless Commun.}, vol.~14,
  no.~7, pp. 3519--3532, Jul. 2015.

\bibitem{Kor14}
D.~Korpi, L.~Anttila, and M.~Valkama, ``Reference receiver based digital
  self-interference cancellation in {MIMO} full-duplex transceivers,'' in
  \emph{Proc. {IEEE} Global Commun. Conf. (GLOBECOM)}, Austin, TX, USA, Dec.
  2014.

\bibitem{Atz16}
I.~Atzeni, M.~Maso, and M.~Kountouris, ``Optimal low-complexity
  self-interference cancellation for full-duplex {MIMO} small cells,'' in
  \emph{Proc. {IEEE} Int. Conf. Commun. (ICC)}, Kuala Lumpur, Malaysia, May
  2016.

\bibitem{Ale16}
G.~C. Alexandropoulos, M.~Kountouris, and I.~Atzeni, ``User scheduling and
  optimal power allocation for full-duplex cellular networks,'' in \emph{Proc.
  {IEEE} Int. Workshop Signal Process. Adv. Wireless Commun. (SPAWC)},
  Edinburgh, UK, July 2016.

\bibitem{Atz16a}
I.~Atzeni, M.~Kountouris, and G.~C. Alexandropoulos, ``Performance evaluation
  of user scheduling for full-duplex small cells in ultra-dense networks,'' in
  \emph{European Wireless (EW) Conf.}, Oulu, Finland, May 2016.

\bibitem{Wan17}
S.~Wang, V.~Venkateswaran, and X.~Zhang, ``Fundamental analysis of full-duplex
  gains in wireless networks,'' \emph{IEEE/ACM Trans. Netw.}, vol.~25, no.~3,
  pp. 1401--1416, June 2017.

\bibitem{Wan15}
L.~Wang, F.~Tian, T.~Svensson, D.~Feng, M.~Song, and S.~Li, ``Exploiting full
  duplex for device-to-device communications in heterogeneous networks,''
  \emph{{IEEE} Commun. Mag.}, vol.~53, no.~5, pp. 146--152, May 2015.

\bibitem{Bai13}
J.~Bai and A.~Sabharwal, ``Distributed full-duplex via wireless side-channels:
  Bounds and protocols,'' \emph{{IEEE} Trans. Wireless Commun.}, vol.~12,
  no.~8, pp. 4162--4173, Aug. 2013.

\bibitem{Hae12}
M.~Haenggi, \emph{Stochastic Geometry for Wireless Networks}.\hskip 1em plus
  0.5em minus 0.4em\relax New York, NY, USA: Cambridge University Press, 2012.

\bibitem{Atz18}
I.~Atzeni, J.~Arnau, and M.~Kountouris, ``Downlink cellular network analysis
  with {LOS/NLOS} propagation and elevated base stations,'' \emph{{IEEE} Trans.
  Wireless Commun.}, vol.~17, no.~1, pp. 142--156, Jan. 2018.

\bibitem{3GPP36828}
3GPP, ``{TR 36.828 (V11.0.0)}: Further enhancements to {LTE} time division
  duplex {(TDD)} for downlink-uplink {(DL-UL)} interference management and
  traffic adaptation,'' Tech. Rep., Jun. 2012.

\end{thebibliography}

\clearpage
\printindex

\end{document}